\newcommand{\LB}{\textnormal{LB}}
\newcommand{\UB}{\textnormal{UB}}
\newcommand{\WS}{\textnormal{WS}}
\newcommand{\XE}{X_{\textnormal{E}}}
\newcommand{\XS}{X_{\textnormal{S}}}
\newcommand{\YN}{Y_{\textnormal{N}}}
\newdimen\LineSpace
\tikzset{
    line space/.code={\LineSpace=#1},
    line space=3pt
}
\title{The power of the weighted sum scalarization for approximating multiobjective optimization problems\thanks{This work was supported by the DFG grants RU~1524/6-1 and TH~1852/4-1 as well as by a bilateral cooperation project funded by the German Academic Exchange Service (DAAD, Project-ID 57388848) and by Campus France, PHC PROCOPE 2018 (Project no. 40407WF).}}
\author{Cristina Bazgan\thanks{Universit\'e Paris-Dauphine, PSL Research University, CNRS,  LAMSADE, 75016 Paris, France 
  (\email{bazgan@lamsade.dauphine.fr}).}
\and Stefan Ruzika\thanks{Department of Mathematics, University of Kaiserslautern,
  Paul-Ehrlich-Str.~14, 67663~Kaiserslautern, Germany
  (\email{ruzika@mathematik.uni-kl.de}).}
\and Clemens Thielen\thanks{Technical University of Munich, TUM Campus Straubing for Biotechnology and Sustainability, Essigberg~3, 94315~Straubing, Germany
  (\email{clemens.thielen@tum.de}).}
 \and {Daniel Vanderpooten\thanks{Universit\'e Paris-Dauphine, PSL Research University, CNRS,  LAMSADE, 75016 Paris, France 
  (\email{daniel.vanderpooten@lamsade.dauphine.fr}).}}}
\begin{document}

\maketitle
    
\begin{abstract}
We determine the power of the weighted sum scalarization with respect to the computation of approximations for general multiobjective minimization and maximization problems. Additionally, we introduce a new multi-factor notion of approximation that is specifically tailored to the multiobjective case and its inherent trade-offs between different objectives.

\smallskip

For minimization problems, we provide an efficient algorithm that computes an approximation of a multiobjective problem by using an exact or approximate algorithm for its weighted sum scalarization. In case that an exact algorithm for the weighted sum scalarization is used, this algorithm comes arbitrarily close to the best approximation quality that is obtainable by supported solutions -- both with respect to the common notion of approximation and with respect to the new multi-factor notion. Moreover, the algorithm yields the currently best approximation results for several well-known multiobjective minimization problems.
For maximization problems, however, we show that a polynomial approximation guarantee can, in general, not be obtained in more than one of the objective functions simultaneously by supported solutions.
\end{abstract}

\begin{keywords}
  multiobjective optimization, approximation, weighted sum scalarization
\end{keywords}

\begin{AMS}
    90C29, 90C59
\end{AMS}

\section{Introduction}

Almost any real-world optimization problem asks for optimizing more than one objective function (e.g., the minimization of cost and time in transportation systems or the maximization of profit and safety in investments). Clearly, these objectives are conflicting, often incommensurable, and, yet, they have to be taken into account simultaneously. The discipline dealing with such problems is called \emph{multiobjective optimization}. Typically, multiobjective optimization problems are solved according to the Pareto principle of optimality: a solution is called \emph{efficient} (or \emph{Pareto optimal}) if no other feasible solution exists that is not worse in any objective function and better in at least one objective. The images of the efficient solutions in the objective space are called \emph{nondominated points}. In contrast to single objective optimization, where one typically asks for one optimal solution, the main goal of multiobjective optimization is to compute the set of all nondominated points and, for each of them, one corresponding efficient solution. Each of these solutions corresponds to a different compromise among the set of objectives and may potentially be relevant for a decision maker.

\smallskip

Several results in the literature, however, show that multiobjective optimization problems are hard to solve exactly~\cite{Ehrgott:book,Ehrgott:hard-to-say} and, in addition, the cardinalities of the set of nondominated points (the \emph{nondominated set}) and the set of efficient solutions (the \emph{efficient set}) may be exponentially large for discrete problems (and are typically infinite for continuous problems). This impairs the applicability of exact solution methods to real-life problems and provides a strong motivation for studying \emph{approximations of multiobjective optimization problems}.

\smallskip

Both exact and approximate solution methods for multiobjective optimization problems often resort to using single objective auxiliary problems, which are called \emph{scalarizations} of the original multiobjective problem. This refers to the transformation of a multiobjective optimization problem into a single objective auxiliary problem based on a procedure that might use additional parameters, auxiliary points, or variables. The resulting scalarized optimization problems are then solved using methods from single objective optimization and the obtained solutions are interpreted in the context of Pareto optimality.


\smallskip

The simplest and most widely used scalarization technique is the \emph{weighted sum scalarization} (see, e.g.,~\cite{Ehrgott:book}). Here, the scalarized auxiliary problem is constructed by assigning a weight to each of the objective functions and summing up the resulting weighted objective functions in order to obtain the objective function of the scalarized problem. If the weights are chosen to be positive, then every optimal solution of the resulting \emph{weighted sum problem} is efficient. Moreover, the weighted sum scalarization does not change the feasible set and, in many cases, boils down to the single objective version of the given multiobjective problem --- which represents an important advantage of this scalarization especially for combinatorial problems. However, only some efficient solutions (called \emph{supported solutions}) can be obtained by means of the weighted sum scalarization, while many other efficient solutions (called \emph{unsupported solutions}) cannot. Consequently, a natural question is to determine which approximations of the whole efficient set can be obtained by using this very important scalarization technique.

\smallskip

\subsection{Previous work}\enlargethispage{\baselineskip} 

Besides many specialized approximation algorithms for particular multiobjective optimization problems, there exist several general approximation methods that can be applied to broad classes of multiobjective problems. An extensive survey of these general approximation methods is provided in~\cite{Herzel+etal:survey}.

Most of these general approximation methods for multiobjective problems are based on the seminal work of Papadimitriou and Yannakakis~\cite{Papadimitriou+Yannakakis:multicrit-approx}, who present a method for generating a $(1+\varepsilon,\dots,1+\varepsilon)$-approximation for general multiobjective minimization and maximization problems with a constant number of positive-valued, polynomially computable objective functions. They show that a $(1+\varepsilon,\dots,1+\varepsilon)$-approximation with size polynomial in the encoding length of the input and $\frac{1}{\varepsilon}$ always exists. Moreover, their results show that the construction of such an approximation is possible in (fully) polynomial time, i.e., the problem admits a \emph{multiobjective (fully) polynomial-time approximation scheme} or \emph{MPTAS} (\emph{MFPTAS}), if and only if a certain auxiliary problem called the \emph{gap problem} can be solved in (fully) polynomial time.

More recent articles building upon the results of~\cite{Papadimitriou+Yannakakis:multicrit-approx} present methods that additionally yield bounds on the size of the computed $(1+\varepsilon,\dots,1+\varepsilon)$-approximation relative to the size of the smallest $(1+\varepsilon,\dots,1+\varepsilon)$-approximation possible~\cite{Vassilvitskii+Yannakakis:trade-off-curves,Diakonikolas+Yannakakis:approx-pareto-sets,Bazgan+etal:min-pareto}. Moreover, it has recently been shown in~\cite{Bazgan+etal:one-exact} that an even better $(1,1+\varepsilon,\dots,1+\varepsilon)$-approximation (i.e., an approximation that is exact in one objective function and $(1+\varepsilon)$-approximate in all other objective functions) always exists, and that such an approximation can be computed in (fully) polynomial time if and only if the so-called \emph{dual restrict problem} (introduced in~\cite{Diakonikolas+Yannakakis:approx-pareto-sets}) can be solved in (fully) polynomial time.

\smallskip

Other works study how the weighted sum scalarization can be used in order to compute a set of solutions such that the convex hull of their images in the objective space yields an approximation guarantee of $(1+\varepsilon,\dots,1+\varepsilon)$~\cite{Diakonikolas+Yannakakis:SODA08,Diakonikolas:Phd,Diakonikolas+Yannakakis:chord-algorithm}. Using techniques similar to ours, Diakonikolas and Yannakakis~\cite{Diakonikolas+Yannakakis:SODA08} show that such a so-called \emph{$\varepsilon$-convex Pareto set} can be computed in (fully) polynomial time if and only if the weighted sum scalarization admits a (fully) polynomial-time approximation scheme. Additionally, they consider questions regarding the cardinality of $\varepsilon$-convex Pareto sets.

\smallskip

Besides the general approximation methods mentioned above that work for both minimization and maximization problems, there exist several general approximation methods that are restricted either to minimization problems or to maximization problems.

For \emph{minimization} problems, there are two general approximation methods that are both based on using (approximations of) the weighted sum scalarization.
The previously best general approximation method for multiobjective minimization problems with an arbitrary constant number of objectives that uses the weighted sum scalarization can be obtained by combining two results of Gla\ss er et al.~\cite{Glasser+etal:multi-hardness,Glasser+etal:CiE2010}. They introduce another auxiliary problem called the \emph{approximate domination problem}, which is similar to the gap problem.
Gla\ss er et al. show that, if this problem is solvable in polynomial time for some approximation factor $\alpha\geq 1$, then an approximating set providing an approximation factor of $\alpha\cdot (1+\varepsilon)$ in every objective function can be computed in fully polynomial time for every $\varepsilon>0$. Moreover, they show that the approximate domination problem with $\alpha\colonequals\sigma\cdot p$ can be solved by using a $\sigma$-approximation algorithm for the weighted sum scalarization of the $p$-objective problem. Together, this implies that a $((1+\varepsilon)\cdot\sigma\cdot p,\ldots,(1+\varepsilon)\cdot\sigma\cdot p)$-approximation can be computed in fully polynomial time for $p$-objective minimization problems provided that the objective functions are positive-valued and polynomially computable and a $\sigma$-approximation algorithm for the weighted sum scalarization exists. 
As this result is not explicitly stated in~\cite{Glasser+etal:multi-hardness,Glasser+etal:CiE2010}, no bounds on the running time are provided. 

For \emph{biobjective} minimization problems, Halffmann et al.~\cite{Halffmann+etal:bicriteria} show how to obtain a $(\sigma\cdot(1+2\varepsilon),\sigma\cdot(1+\frac{2}{\varepsilon}))$-approximation for any given $0<\varepsilon\leq 1$ if a polynomial-time $\sigma$-approximation algorithm for the weighted sum scalarization is given.

\smallskip

Obtaining general approximation methods for multiobjective \emph{maximization} problems using the weighted sum scalarization seems to be much harder than for minimization problems. Indeed, Gla\ss er et al.~\cite{Glasser+etal:CiE2010} show that certain translations of approximability results from the weighted sum scalarization of an optimization problem to the multiobjective version that work for minimization problems are not possible in general for maximization problems. 

\smallskip

An approximation method specifically designed for multiobjective maximization problems is presented by Bazgan et al.~\cite{Bazgan+etal:fixed-number}. Their method is applicable to biobjective maximization problems that satisfy an additional structural assumption on the set of feasible solutions and the objective functions: For each two feasible solutions none of which approximates the other one by a factor of~$\alpha$ in both objective functions, a third solution approximating both given solutions in both objective functions by a certain factor depending on $\alpha$ and a parameter~$c$ must be computable in polynomial time. The approximation factor obtained by the algorithm then depends on~$\alpha$ and~$c$. 

\subsection{Our contribution}\label{subsec:our-contribution}

Our contribution is twofold: First, in order to better capture the approximation quality in the context of multiobjective optimization problems, we introduce a new notion of approximation for the multiobjective case. This new notion comprises the common notion of approximation, but is specifically tailored to the multiobjective case and its inherent trade-offs between different objectives. Second, we provide a precise analysis of the approximation quality obtainable for multiobjective optimization problems by means of an exact or approximate algorithm for the weighted sum scalarization -- with respect to both the common and the new notion of approximation. 

\smallskip

In order to motivate the new notion of approximation, consider the biobjective case, in which a $(2+\varepsilon,2+\varepsilon)$-approximation can be obtained from the results of Gla\ss er et al.~\cite{Glasser+etal:multi-hardness,Glasser+etal:CiE2010} using an exact algorithm for the weighted sum scalarization. As illustrated in Figure~\ref{fig:multi-factor-motivation}, this approximation guarantee is actually too pessimistic: Since each point~$y$ in the image of the approximating set is nondominated (since it is the image of an optimal solution of the weighted sum scalarization), no images of feasible solutions can be contained in the shaded region. Thus, every feasible solution is actually either $(1,2+\varepsilon)$- or $(2+\varepsilon,1)$-approximated. Consequently, the approximation quality obtained in this case can be more accurately described by using \emph{two vectors of approximation factors}. In order to capture such situations and allow for a more precise analysis of the approximation quality obtained for multiobjective problems, our new \emph{multi-factor notion of approximation} uses a \emph{set of vectors of approximation factors} instead of only a single vector.

\begin{figure}[ht!]
\pgfdeclarepatternformonly{new north west lines}{%
\pgfqpoint{-1pt}{-1pt}}{\pgfqpoint{4pt}{4pt}}{\pgfqpoint{3pt}{3pt}}%
{
  \pgfsetlinewidth{0.4pt}
  \pgfpathmoveto{\pgfqpoint{0pt}{3pt}}
  \pgfpathlineto{\pgfqpoint{3.1pt}{-0.1pt}}
  \pgfusepath{stroke}
}
	\begin{center}
		\begin{tikzpicture}[scale=1.25]
 		\fill[gray!30]  (3.25,3.25) -- (1.625,3.25) -- (1.625,1.625) --  (3.25,1.625)  -- (3.25,3.25);
        \fill[line space=7pt, pattern=my north west lines] (7.4,7.4) --  (1.625,7.4) -- (1.625,1.625) -- (7.4,1.625) --  (7.4,7.4);
        \draw[-] (1.625,7.4) -- (1.625,1.625) -- (7.4,1.625);
        \draw[-] (1.625,3.25) -- (7.4,3.25);
        \draw[-] (3.25,1.625) -- (3.25,7.4);
		\draw[->] (-0.2,0) -- (7.4,0) node[below right] {$f_1$};
		\draw[->] (0,-0.2) -- (0,7.4) node[above left] {$f_2$};
		\fill (3.25,3.25) circle (1.5pt) node[above right] {$y$};
		\end{tikzpicture}
		\caption{Image space of a biobjective minimization problem. The point~$y$ in the image of the approximating set $(2+\varepsilon,2+\varepsilon)$-approximates all points in the hashed region. If~$y$ is nondominated, no images of feasible solutions can be contained in the shaded region, so every image in the hashed region is actually either $(1,2+\varepsilon)$- or $(2+\varepsilon,1)$-approximated.}\label{fig:multi-factor-motivation}
	\end{center}
\end{figure}
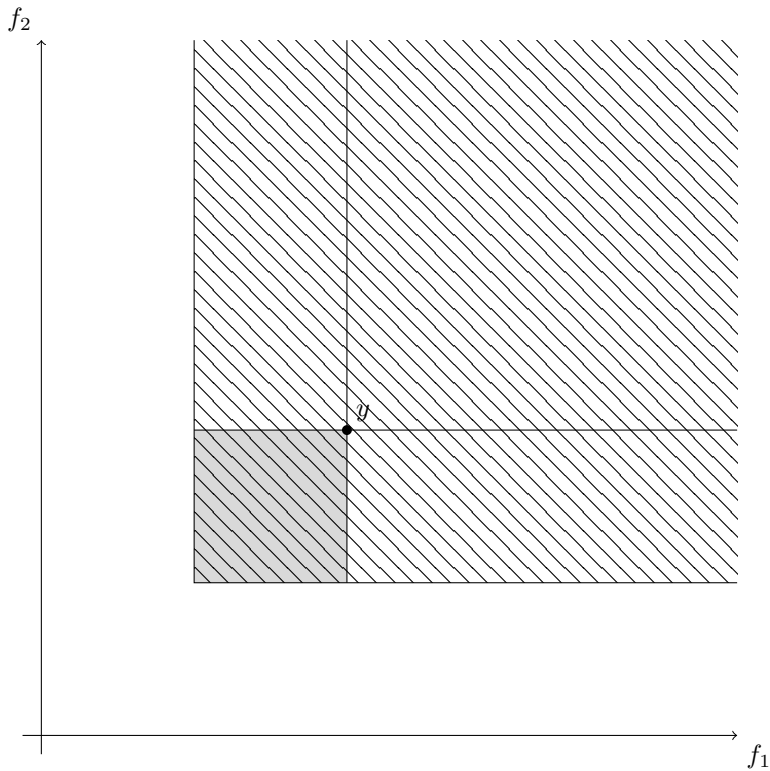

\newpage

The second part of our contribution consists of a detailed analysis of the approximation quality obtainable by using the weighted sum scalarization -- both for multiobjective minimization problems and for multiobjective maximization problems.
For minimization problems, we provide an efficient algorithm that approximates a multiobjective problem using an exact or approximate algorithm for its weighted sum scalarization. We analyze the approximation quality obtained by the algorithm both with respect to the common notion of approximation that uses only a single vector of approximation factors as well as with respect to the new multi-factor notion. With respect to the common notion, our algorithm matches the best previously known approximation guarantee of $(\sigma\cdot p+\varepsilon,\ldots,\sigma\cdot p+\varepsilon)$ obtainable for $p$-objective minimization problems and any $\varepsilon>0$ from a $\sigma$-approximation algorithm for the weighted sum scalarization. More importantly, we show that this result is best-possible in the sense that it comes arbitrarily close to the best approximation guarantee obtainable by supported solutions for the case that an exact algorithm is used to solve the weighted sum problem (i.e., when $\sigma=1$).

\smallskip

When analyzing the algorithm with respect to the new multi-factor notion of approximation, however, a much stronger approximation result is obtained. Here, we show that every feasible solution is approximated with some (possibly different) vector $(\alpha_1,\dots,\alpha_p)$ of approximations factors such that $\sum_{j:\alpha_j>1}\alpha_j = \sigma\cdot p + \varepsilon$. In particular, the worst-case approximation factor of $\sigma\cdot p + \varepsilon$ can actually be tight \emph{in at most one objective} for any feasible point. This shows the multi-factor notion of approximation yields a much stronger approximation result by allowing a refined analysis of the obtained approximation guarantee. Moreover, for $\sigma=1$, we show that the obtained multi-factor approximation result comes arbitrarily close to the best multi-factor approximation result obtainable by supported solutions. We also demonstrate that our algorithm applies to a large variety of multiobjective minimization problems and yields the currently best approximation results for several problems.

\smallskip

Multiobjective maximization problems, however, turn out to be much harder to approximate by using the weighted sum scalarization. Here, we show that a polynomial approximation guarantee can, in general, not be obtained in more than one of the objective functions simultaneously when using only supported solutions.

\smallskip

In summary, our results yield essentially tight bounds on the power of the weighted sum scalarization with respect to the approximation of multiobjective minimization and maximization problems -- both in the common notion of approximation and in the new multi-factor notion.

\medskip

The remainder of the paper is organized as follows: In Section~\ref{sec:preliminaries}, we formally introduce multiobjective optimization problems and provide the necessary definitions concerning their approximation. Section~\ref{sec:minimization} contains our general approximation algorithm for minimization problems (Subsection~\ref{subsec:results}) as well as a faster algorithm for the biobjective case (Subsection~\ref{subsec:biobjective}). Moreover, we show in Subsection~\ref{subsec:tightness} that the obtained approximation results are tight. Section~\ref{sec:applications} presents applications of our results to specific minimization problems. In Section~\ref{sec:maximixation}, we present our impossibility results for maximization problems. Section~\ref{sec:conclusion} concludes the paper and lists directions for future work.

\newpage

\section{Preliminaries}\label{sec:preliminaries}
In the following, we consider a general multiobjective minimization or maximization problem~$\Pi$ of the following form (where either all objective functions are to be minimized or all objective functions are to be to maximized):
\begin{align*}
\min/\max\; & f(x)=(f_1(x),\dots,f_p(x))\\ 
\text{s.\,t. } & x \in X
\end{align*}
Here, as usual, we assume a constant number~$p\geq 2$ of objectives.
The elements $x\in X$ are called \emph{feasible solutions} and the set~$X$, which is assumed to be nonempty, is referred to as the \emph{feasible set}. An image $y=f(x)$ of a feasible solution $x\in X$ is also called a \emph{feasible point}. We let $Y\colonequals f(X) \colonequals \{f(x): x\in X\}\subseteq \mathbb{R}^p$ denote the \emph{set of feasible points}.

We assume that the objective functions take only positive rational values and are polynomially computable. Moreover, for each $j\in\{1,\dots,p\}$, we assume that there exist strictly positive rational lower and upper bounds $\LB(j),\UB(j)$ of polynomial encoding length such that $\LB(j) \leq f_j(x) \leq \UB(j)$ for all $x\in X$. We let $\LB\colonequals \min_{j=1,\dots,p}\LB(j)$ and $\UB\colonequals \max_{j=1,\dots,p}\UB(j)$.

\medskip

\begin{definition}
For a minimization problem~$\Pi$, we say that a point $y=f(x)\in Y$ is \emph{dominated} by another point $y'=f(x')\in Y$ if $y'\neq y$ and
\begin{align*}
y'_j=f_j(x') \leq f_j(x)=y_j \text{ for all } j\in\{1,\dots,p\}.
\end{align*}

Similarly, for a maximization problem~$\Pi$, we say that a point $y=f(x)\in Y$ is \emph{dominated} by another point $y'=f(x')\in Y$ if $y'\neq y$ and
\begin{align*}
y'_j=f_j(x') \geq  f_j(x)=y_j \text{ for all } j\in\{1,\dots,p\}.
\end{align*}
If the point $y=f(x)$ is not dominated by any other point~$y'$, we call~$y$ \emph{nondominated} and the feasible solution $x\in X$ \emph{efficient}. The set~$\YN$ of nondominated points is called the \emph{nondominated set} and the set~$\XE$ of efficient solutions is called the \emph{efficient set} or \emph{Pareto set}.
\end{definition}


\subsection{Notions of approximation}

We first recall the standard definitions of approximation for single objective optimization problems.

\begin{definition}
Consider a single objective optimization problem~$\Pi$ and let $\alpha\geq 1$. If $\Pi$ is a minimization problem, we say that a feasible solution $x\in X$ \emph{$\alpha$-appro\-xi\-mates} another feasible solution $x'\in X$ if $f(x) \leq \alpha \cdot f(x')$. If $\Pi$ is a maximization problem, we say that a feasible solution $x\in X$ \emph{$\alpha$-approximates} another feasible solution $x'\in X$ if $\alpha\cdot f(x) \geq  f(x')$.
A feasible solution that $\alpha$-approximates every feasible solution of~$\Pi$ is called \emph{an $\alpha$-approximation} for~$\Pi$.

A \emph{(polynomial-time) $\alpha$-approximation algorithm} is an algorithm that, for every instance~$I$ with encoding length~$|I|$, computes an $\alpha$-approximation for~$\Pi$ in time bounded by a polynomial in~$|I|$.
\end{definition}

\noindent
The following definition extends the concept of approximation to the multiobjective case.

\begin{definition}
Let $\alpha=(\alpha_1,\dots,\alpha_p)\in\mathbb{R}^p$ with $\alpha_j\geq 1$ for all $j\in\{1,\dots,p\}$.

For a minimization problem~$\Pi$, we say that a feasible solution $x\in X$ \emph{$\alpha$-appro\-xi\-mates} another feasible solution $x'\in X$ (or, equivalently, that the feasible point $y=f(x)$ \emph{$\alpha$-appro\-xi\-mates} the feasible point $y'=f(x')$) if
\begin{align*}
f_j(x) \leq \alpha_j\cdot f_j(x') \text{ for all } j\in\{1,\dots,p\}.
\end{align*}

Similarly, for a maximization problem~$\Pi$, we say that a feasible solution $x\in X$ \emph{$\alpha$-approximates} another feasible solution $x'\in X$ (or, equivalently, that the feasible point $y=f(x)$ \emph{$\alpha$-appro\-xi\-mates} the feasible point $y'=f(x')$) if
\begin{align*}
\alpha_j\cdot f_j(x) \geq  f_j(x') \text{ for all } j\in\{1,\dots,p\}.
\end{align*}
\end{definition}

The standard notion of approximation for multiobjective optimization problems   used in the literature is the following one.

\begin{definition}\label{def:approx-standard}
Let $\alpha=(\alpha_1,\dots,\alpha_p)\in\mathbb{R}^p$ with $\alpha_j\geq 1$ for all $j\in\{1,\dots,p\}$.

A set~$P\subseteq X$ of feasible solutions is called an \emph{$\alpha$-approximation} for the multiobjective problem~$\Pi$ if, for any feasible solution~$x'\in X$, there exists a solution~$x\in P$ that $\alpha$-approximates~$x'$.
\end{definition}

In the following definition, we generalize the standard notion of approximation for multiobjective problems by allowing a \emph{set of vectors of approximation factors} instead of only a single vector, which allows for tighter approximation results.

\begin{definition}\label{def:approx-generalized}
Let $\mathcal{A}\subseteq\mathbb{R}^p$ be a set of vectors with $\alpha_j\geq 1$ for all $\alpha\in \mathcal{A}$ and all $j\in\{1,\dots,p\}$. Then a set~$P\subseteq X$ of feasible solutions is called a \emph{(multi-factor) $\mathcal{A}$-approximation} for the multiobjective problem~$\Pi$ if, for any feasible solution~$x'\in X$, there exists a solution~$x\in P$ and a vector~$\alpha\in\mathcal{A}$ such that~$x$ $\alpha$-approximates~$x'$.
\end{definition}




Note that, in the case where $\mathcal{A}=\{(\alpha_1,\dots,\alpha_p)\}$ is a singleton, an $\mathcal{A}$-appro\-xi\-ma\-tion for a multiobjective problem according to Definition~\ref{def:approx-generalized} is equivalent to an $(\alpha_1,\dots,\alpha_p)$-approximation according to Definition~\ref{def:approx-standard}.





\subsection{Weighted sum scalarization}
Given a $p$-objective optimization problem~$\Pi$ and a vector $w=(w_1,\ldots,w_p)\in\mathbb{R}^p$ with $w_j>0$ for all $j \in\{1,\ldots,p\}$,  the \emph{weighted sum problem} (or \emph{weighted sum scalarization})~$\Pi^{\WS}(w)$ associated with~$\Pi$ is defined as the following single objective optimization problem: 
\begin{align*}
\min/\max\; & \sum_{j=1}^{p} w_j \cdot f_{j}(x)\\
\text{s.\,t. } & x \in X
\end{align*}

\begin{definition}
A feasible solution~$x\in X$ is called \emph{supported} if there exists a vector $w=(w_1,\ldots,w_p) \in\mathbb{R}^p$ of positive weights such that~$x$ is an optimal solution of the weighted sum problem~$\Pi^{\WS}(w)$. In this case, the feasible point $y=f(x)\in Y$ is called a \emph{supported point}. The set of all supported solutions is denoted by~$\XS$.
\end{definition}

It is well-known that every supported point is nondominated and, correspondingly, every supported solution is efficient (cf.~\cite{Ehrgott:book}).

\smallskip

In the following, we assume that there exists a polynomial-time $\sigma$-approxi\-ma\-tion algorithm $\WS_{\sigma}$ for the weighted sum problem, where $\sigma\geq 1$ can be either a constant or a function of the input size. When calling $\WS_{\sigma}$ with some specific weight vector~$w$, we denote this by $\WS_{\sigma}(w)$. This algorithm then returns a solution~$\hat x$ such that
$\sum_{j=1}^{p} w_j f_j(\hat x) \leq \sigma \cdot \sum_{j=1}^{p} w_j f_j(x)$ for all $x\in X$, if $\Pi$ is a minimization problem, and $\sigma \cdot \sum_{j=1}^{p} w_j f_j(\hat x) \geq  \sum_{j=1}^{p} w_j f_j(x)$ for all $x\in X$, if $\Pi$ is a maximization problem. The running time of algorithm $\WS_{\sigma}$ is denoted by $T_{WS_{\sigma}}$.


\medskip

The following result shows that a $\sigma$-approximation for the weighted sum problem is also a $\sigma$-approximation of any solution in at least one of the objectives.

\begin{lemma}\label{lem:sigma_efficiency}
Let~$\hat{x} \in X$ be a $\sigma$-approximation for~$\Pi^{\WS}(w)$ for some positive weight vector~$w\in\mathbb{R}^p$. Then, for any feasible solution~$x \in X$, there exists at least one $i \in \{1,\ldots,p\}$ such that $\hat{x}$ $\sigma$-approximates~$x$ in objective~$f_i$. 
\end{lemma}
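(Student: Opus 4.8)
The plan is to argue by contradiction, exploiting the fact that the weighted sum aggregates all objectives with strictly positive weights, so that failure to approximate in \emph{every} objective would force a strict violation of the weighted sum guarantee. I would treat the minimization case first. Fix an arbitrary feasible solution $x \in X$ and suppose, for contradiction, that $\hat{x}$ fails to $\sigma$-approximate $x$ in every objective, i.e., that $f_j(\hat{x}) > \sigma \cdot f_j(x)$ holds for all $j \in \{1,\dots,p\}$. Since each weight $w_j$ is strictly positive, I would multiply the $j$-th inequality by $w_j$ and sum over $j$ to obtain $\sum_{j=1}^{p} w_j f_j(\hat{x}) > \sigma \cdot \sum_{j=1}^{p} w_j f_j(x)$.

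This strict inequality directly contradicts the defining property of $\hat{x}$ as a $\sigma$-approximation for $\Pi^{\WS}(w)$, which guarantees $\sum_{j=1}^{p} w_j f_j(\hat{x}) \leq \sigma \cdot \sum_{j=1}^{p} w_j f_j(x)$. Hence the assumption is untenable, and there must exist at least one index $i$ with $f_i(\hat{x}) \leq \sigma \cdot f_i(x)$; by definition this means $\hat{x}$ $\sigma$-approximates $x$ in objective $f_i$, as claimed.

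For the maximization case I would run the symmetric argument: assuming $\sigma \cdot f_j(\hat{x}) < f_j(x)$ for all $j$, weighting by the positive $w_j$ and summing yields $\sigma \cdot \sum_{j=1}^{p} w_j f_j(\hat{x}) < \sum_{j=1}^{p} w_j f_j(x)$, which contradicts the maximization guarantee $\sigma \cdot \sum_{j=1}^{p} w_j f_j(\hat{x}) \geq \sum_{j=1}^{p} w_j f_j(x)$. Again the conclusion follows by negating the supposition.

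The argument is essentially a pigeonhole (averaging) observation, so I do not anticipate a genuine obstacle. The only points requiring care are that the \emph{strict} positivity of the weights is what keeps the summed inequality strict, yielding a clean contradiction, and that one must keep the direction of the inequalities consistent when switching between the minimization and maximization conventions.
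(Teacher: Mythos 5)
Your proof is correct and follows essentially the same route as the paper: a contradiction argument in which failure to $\sigma$-approximate in every objective is multiplied by the strictly positive weights and summed to violate the weighted sum guarantee. The paper treats only the minimization case explicitly and notes the maximization case is analogous, which you spell out; there is no substantive difference.
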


\begin{proof}
Consider the case where~$\Pi$ is a multiobjective \emph{minimization} problem (the proof for the case where~$\Pi$ is a maximization problem works analogously). Then, we must show that, for any feasible solution $x \in X$, there exists at least one $i \in \{1,\ldots,p\}$ such that $f_i(\hat{x}) \leq \sigma \cdot f_i(x)$.

Assume by contradiction that there exists some~$x' \in X$ such that $f_j(\hat{x}) > \sigma \cdot f_j(x')$ for all $j \in \{1,\ldots,p\}$. Then, we obtain $\sum_{j=1}^{p} w_j \cdot f_{j}(\hat{x}) > \sigma \cdot  \sum_{j=1}^{p} w_j \cdot f_{j}(x')$, which contradicts the assumption that~$\hat{x}$ is a $\sigma$-approximation for $\Pi^{\WS}(w)$.
\end{proof}


\section{A multi-factor approximation result for minimization problems}\label{sec:minimization}

In this section, we study the approximation of multiobjective \emph{minimization} problems by (approximately) solving weighted sum problems. In Subsection~\ref{subsec:results}, we propose a multi-factor approximation algorithm that significantly improves upon the $((1+\varepsilon)\cdot\sigma\cdot p, \dots,(1+\varepsilon)\cdot\sigma\cdot p)$-approximation algorithm that can be derived from Gla\ss er et al.~\cite{Glasser+etal:CiE2010}. The biobjective case is then investigated in Subsection~\ref{subsec:biobjective}. Finally, we show in Subsection~\ref{subsec:tightness} that the resulting approximation is tight.

\subsection{General results}\label{subsec:results}

\begin{proposition}\label{prop_1WS}
Let $\bar{x}\in X$ be a feasible solution of~$\Pi$ and let $b=(b_1,\dots,b_p)$ be such that  $b_j\leq f_j(\bar{x})\leq (1+\varepsilon)\cdot b_j$ for $j=1,\dots,p$ and some $\varepsilon>0$.
Applying~$\WS_{\sigma}(w)$ with $w_j\colonequals\frac{1}{b_j}$ for $j=1,\dots,p$ yields a solution~$\hat x$ that $(\alpha_1,\dots,\alpha_p)$-approximates~$\bar x$ for some $\alpha_1,\dots,\alpha_p\geq1$ such that $\alpha_i \leq \sigma$ for at least one $i \in \{1,\ldots,p\}$ and
\begin{align*}
	\sum_{j:\alpha_j>1} \alpha_j = (1+\varepsilon)\cdot\sigma\cdot p.
\end{align*}
\end{proposition}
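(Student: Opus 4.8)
The plan is to distil the $\sigma$-approximation guarantee of $\WS_{\sigma}$ into a single scalar inequality and then read off coordinatewise approximation factors whose thresholded sum is controlled. First I would use the defining property of the solution $\hat x$ returned by $\WS_{\sigma}(w)$ with $w_j=\frac{1}{b_j}$, evaluated against $\bar x$, to get $\sum_{j=1}^p \frac{f_j(\hat x)}{b_j}\le \sigma\sum_{j=1}^p \frac{f_j(\bar x)}{b_j}$. Because $f_j(\bar x)\le (1+\varepsilon)b_j$ by hypothesis, each summand on the right is at most $1+\varepsilon$, so
\[
  \sum_{j=1}^p \frac{f_j(\hat x)}{b_j}\ \le\ (1+\varepsilon)\,\sigma\, p .
\]
This single inequality carries all the quantitative content of the statement.

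Next I would introduce the minimal honest factors $\ell_j:=\max\{1,\,f_j(\hat x)/f_j(\bar x)\}$. By construction $\ell_j\ge 1$ and $f_j(\hat x)\le \ell_j f_j(\bar x)$, so $\hat x$ already $(\ell_1,\dots,\ell_p)$-approximates $\bar x$. Moreover, since $b_j\le f_j(\bar x)$, any index with $\ell_j>1$ satisfies $\ell_j=f_j(\hat x)/f_j(\bar x)\le f_j(\hat x)/b_j$; summing over these indices and invoking the displayed bound gives $\sum_{j:\ell_j>1}\ell_j\le (1+\varepsilon)\sigma p$. For the ``$\le\sigma$ in one coordinate'' requirement I would apply Lemma~\ref{lem:sigma_efficiency} to $\bar x$, obtaining an index $i_0$ with $f_{i_0}(\hat x)\le\sigma f_{i_0}(\bar x)$, hence $\ell_{i_0}\le\sigma$; this coordinate I would leave untouched.

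The one genuinely delicate point is that the statement demands \emph{equality} $\sum_{j:\alpha_j>1}\alpha_j=(1+\varepsilon)\sigma p$, whereas so far I only have an upper bound, so I must inflate the factors by exactly the slack without destroying their validity or the coordinate $i_0$. I would pick $k\in\arg\max_{j\ne i_0}\ell_j$ (which exists because $p\ge 2$), keep $\alpha_j=\ell_j$ for all $j\ne k$, and set $\alpha_k:=(1+\varepsilon)\sigma p-\sum_{j\ne k:\ell_j>1}\ell_j$, i.e., let the coordinate $k$ absorb the entire deficit. Since the construction keeps $\alpha_k\ge\ell_k$ (verified below), raising the factor preserves $f_k(\hat x)\le\alpha_k f_k(\bar x)$, and $\alpha_{i_0}=\ell_{i_0}\le\sigma$ is retained, giving the required index $i$.

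What remains is to confirm $\alpha_k>1$, for then $k$ stays in the index set and, directly from the definition of $\alpha_k$, the thresholded sum equals $(1+\varepsilon)\sigma p$. I foresee a short two-regime check. If some $j\ne i_0$ has $\ell_j>1$, then $k$ is such an index and $\alpha_k=\ell_k+\bigl((1+\varepsilon)\sigma p-\sum_{j:\ell_j>1}\ell_j\bigr)\ge\ell_k>1$. Otherwise $\ell_j=1$ for every $j\ne i_0$, so $\sum_{j\ne k:\ell_j>1}\ell_j\le\ell_{i_0}\le\sigma$ and hence $\alpha_k\ge(1+\varepsilon)\sigma p-\sigma=\sigma\bigl((1+\varepsilon)p-1\bigr)\ge\sigma(1+2\varepsilon)>1$, using $p\ge2$, $\sigma\ge1$, and $\varepsilon>0$. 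The main obstacle is therefore not the inequality itself but this bookkeeping: forcing exact equality while simultaneously keeping every $\alpha_j$ a legitimate approximation factor and retaining one coordinate bounded by $\sigma$.
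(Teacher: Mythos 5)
Your proof follows the same route as the paper's: the same weighted-sum inequality $\sum_j f_j(\hat x)/b_j \le (1+\varepsilon)\sigma p$, the same passage to $\sum_j f_j(\hat x)/f_j(\bar x)$ via $b_j\le f_j(\bar x)$, the same definition $\alpha_j=\max\{1,f_j(\hat x)/f_j(\bar x)\}$, and the same appeal to Lemma~\ref{lem:sigma_efficiency} for the coordinate with $\alpha_i\le\sigma$. The only difference is that you rigorously convert the resulting upper bound $\sum_{j:\alpha_j>1}\alpha_j\le(1+\varepsilon)\sigma p$ into the stated equality by inflating one factor $\alpha_k$ with $k\ne i_0$ (and your two-regime check that $\alpha_k\ge\ell_k$ and $\alpha_k>1$ is correct), whereas the paper dispenses with this in one sentence by declaring that equality describes the worst case; your bookkeeping is a legitimate, slightly more careful treatment of that point and does not change the substance of the argument.
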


\begin{proof}
Since~$\hat x$ is the solution returned by $\WS_{\sigma}(w)$, we have
\begin{align*}
  \sum_{j=1}^p\frac{1}{b_j} f_j(\hat{x})
  \leq \sigma\cdot\left(\sum_{j=1}^p\frac{1}{b_j}f_j(\bar{x})\right)
  \leq \sigma\cdot(1+\varepsilon)\cdot\left(\sum_{j=1}^p 1\right) = (1+\varepsilon)\cdot\sigma\cdot p.
\end{align*}


\noindent Since $\frac{1}{b_j} \geq \frac{1}{f_j(\bar{x})}$, we get $\sum_{j=1}^p  \frac{f_j(\hat{x})}{f_j(\bar{x})} \leq \sum_{j=1}^p\frac{1}{b_j} f_j(\hat{x})$, which yields

\begin{align*}
	\sum_{j=1}^p  \frac{f_j(\hat{x})}{f_j(\bar{x})} \leq (1+\varepsilon)\cdot\sigma\cdot p.
\end{align*}

\noindent  Setting $\alpha_j \colonequals \max\left\{1,\frac{f_j(\hat{x})}{f_j(\bar{x})}\right\}$ for $j=1,\dots,p$, we have
\begin{align*}
	\sum_{j:\alpha_j>1} \alpha_j \leq (1+\varepsilon)\cdot\sigma\cdot p. 
\end{align*}
The worst case approximation factors $\alpha_j$ are then obtained when equality holds in the previous inequality.
Moreover, by Lemma~\ref{lem:sigma_efficiency}, there exists at least one $i \in \{1,\ldots,p\}$ such that $f_i(\hat{x}) \leq \sigma \cdot f_i(\bar{x})$. Thus, we have $\alpha_i\leq \sigma$ for at least one $i \in \{1,\ldots,p\}$, which proves the claim.
\end{proof}


Proposition~\ref{prop_1WS} motivates to apply the given $\sigma$-approximation algorithm~$\WS_{\sigma}$ for~$\Pi^{\WS}$ iteratively for different weight vectors~$w$ in order to obtain an approximation of the multiobjective minimization problem~$\Pi$. This is formalized in Algorithm~\ref{alg:mainAlgo}, whose correctness and running time are established in Theorem~\ref{thm:main-result}.

\begin{algorithm2e}
\SetKw{Compute}{compute}
\SetKw{Break}{break}
\SetKwInOut{Input}{input}\SetKwInOut{Output}{output}
\SetKwComment{command}{right mark}{left mark}

\Input{an instance of a $p$-objective minimization problem~$\Pi$, $\varepsilon >0$, a $\sigma$-approxi\-ma\-tion algorithm $\WS_{\sigma}$ for the weighted sum problem}

\Output{an \emph{$\mathcal{A}$-approximation} $P$ for problem~$\Pi$}

\BlankLine
$P \leftarrow \emptyset$

$\varepsilon' \leftarrow \frac{\varepsilon}{\sigma\cdot p}$

\For{$j\leftarrow 1$ \KwTo $p$}{

$u_j \leftarrow$ largest integer such that $LB(j)\cdot(1+\varepsilon')^{u_j} \leq UB(j)$
}

\For{$k\leftarrow 1$ \KwTo $p$}{

$i_k \leftarrow 0$

\ForEach {$(i_1,\ldots,i_p)$ such that $i_{\ell} \in \{1,\ldots,u_{\ell}\}$ for $\ell < k$, and $i_{\ell} \in \{0,\ldots,u_{\ell}\}$ for $\ell > k$}
{

\For{$j\leftarrow 1$ \KwTo $p$}{

$b_j \leftarrow LB(j)\cdot(1 + \varepsilon')^{i_j}$

$w_j \leftarrow \frac{1}{b_j}$
}
$x \leftarrow \WS_{\sigma}(w)$

$P \leftarrow P \cup \{x\}$
}
}
\Return $P$
\caption{An \emph{$\mathcal{A}$-approximation for $p$-objective minimization problems}\label{alg:mainAlgo}}
\end{algorithm2e}

\begin{theorem}\label{thm:main-result}
For a $p$-objective minimization problem, Algorithm~\ref{alg:mainAlgo} outputs an $\mathcal{A}$-approxi\-ma\-tion where
\begin{align*}
\mathcal{A} = & \{(\alpha_1,\dots,\alpha_p) : \alpha_1,\dots,\alpha_p \geq 1, \alpha_i \leq \sigma \mbox{ for at least one $i$, and} \sum_{j:\alpha_j>1} \alpha_j = \sigma\cdot p\ + \varepsilon \}
\end{align*}
in time bounded by $\displaystyle T_{WS_{\sigma}}\cdot \sum_{i=1}^p \prod_{j\neq i}\left\lceil \log_{1+\frac{\varepsilon}{\sigma p}}\frac{\UB(j)}{\LB(j)}\right\rceil \in \mathcal{O}\left(T_{WS_{\sigma}} \left(\frac{\sigma}{\varepsilon} \log\frac{\UB}{\LB}\right)^{p-1} \right)$.
\end{theorem}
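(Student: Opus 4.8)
The plan is to establish the two parts of the theorem separately: correctness (that the output is an $\mathcal{A}$-approximation with the stated $\mathcal{A}$) and the running-time bound. For correctness, the central idea is that Algorithm~\ref{alg:mainAlgo} builds a geometric grid of ``budget'' vectors~$b$ covering the entire feasible image, so that for \emph{any} feasible solution~$\bar x$, at least one grid vector~$b$ tightly brackets it coordinate-wise in the sense required by Proposition~\ref{prop_1WS}. First I would fix an arbitrary $\bar x \in X$ and show that there exists a grid index vector $(i_1,\dots,i_p)$ processed by the algorithm such that the corresponding $b_j = \LB(j)\cdot(1+\varepsilon')^{i_j}$ satisfies $b_j \leq f_j(\bar x) \leq (1+\varepsilon')\cdot b_j$ for every~$j$. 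This follows because the objective values lie in $[\LB(j),\UB(j)]$ and the exponents $0,1,\dots,u_j$ partition this range into consecutive geometric intervals of ratio $1+\varepsilon'$; the choice of $u_j$ as the largest integer with $\LB(j)(1+\varepsilon')^{u_j}\leq\UB(j)$ guarantees the grid reaches up to the upper bound.

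Once such a~$b$ is identified, Proposition~\ref{prop_1WS} (applied with $\varepsilon'$ in place of~$\varepsilon$) directly yields a solution~$\hat x = \WS_\sigma(w)$ with $w_j = 1/b_j$ that $(\alpha_1,\dots,\alpha_p)$-approximates~$\bar x$ with $\alpha_i\leq\sigma$ for at least one~$i$ and $\sum_{j:\alpha_j>1}\alpha_j = (1+\varepsilon')\cdot\sigma\cdot p$. Substituting $\varepsilon' = \frac{\varepsilon}{\sigma\cdot p}$ gives $(1+\varepsilon')\cdot\sigma\cdot p = \sigma\cdot p + \varepsilon$, which is exactly the defining equation of~$\mathcal{A}$. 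Since this~$\hat x$ is added to~$P$, every feasible~$\bar x$ is approximated by some $\alpha\in\mathcal{A}$, establishing that~$P$ is an $\mathcal{A}$-approximation. I would need to be careful that the algorithm's somewhat intricate double loop (the outer loop over~$k$ fixing $i_k=0$ while letting $i_\ell$ range over $\{1,\dots,u_\ell\}$ for $\ell<k$ and $\{0,\dots,u_\ell\}$ for $\ell>k$) really does cover every needed index vector exactly once; this staircase enumeration avoids redundant generation of the same grid point, and I expect verifying this coverage-without-overlap is the main subtle point of the correctness argument.

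For the running time, each iteration of the innermost block makes one call to~$\WS_\sigma$ (cost $T_{WS_\sigma}$) plus cheap bookkeeping, so the total is $T_{WS_\sigma}$ times the number of index vectors enumerated. Counting these, for each fixed~$k$ the coordinate~$k$ is pinned to~$0$, coordinates $\ell<k$ range over $u_\ell$ values, and coordinates $\ell>k$ range over $u_\ell+1$ values; summing over~$k$ yields the bound $\sum_{i=1}^p\prod_{j\neq i}\lceil\log_{1+\frac{\varepsilon}{\sigma p}}\frac{\UB(j)}{\LB(j)}\rceil$, since $u_j+1 = \lceil\log_{1+\varepsilon'}\frac{\UB(j)}{\LB(j)}\rceil$ by the definition of~$u_j$. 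To reach the simplified $\mathcal{O}$-expression I would bound each factor using $\log_{1+\varepsilon'} t = \frac{\ln t}{\ln(1+\varepsilon')}$ together with the standard estimate $\ln(1+\varepsilon') \geq \frac{\varepsilon'}{2}$ (valid for $\varepsilon'$ bounded, e.g. $\varepsilon'\leq 1$) so that each factor is $\mathcal{O}(\frac{1}{\varepsilon'}\log\frac{\UB}{\LB}) = \mathcal{O}(\frac{\sigma p}{\varepsilon}\log\frac{\UB}{\LB})$, giving an overall $\mathcal{O}(T_{WS_\sigma}(\frac{\sigma}{\varepsilon}\log\frac{\UB}{\LB})^{p-1})$ after absorbing the constant factor~$p$ (and the outer sum of~$p$ terms) into the $\mathcal{O}$ for constant~$p$. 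The only genuinely delicate step here is the counting of index vectors, where I must confirm that the staircase enumeration produces the $\sum_i\prod_{j\neq i}$ structure rather than the naive product $\prod_j(u_j+1)$; the purpose of pinning one coordinate to~$0$ in each outer iteration is precisely to reduce the grid from a full $(p)$-dimensional box to this union of $(p-1)$-dimensional slabs, which is what saves a factor and yields the $(p-1)$-power bound.
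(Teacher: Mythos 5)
Your running-time analysis and the reduction of the approximation guarantee to Proposition~\ref{prop_1WS} with $\varepsilon'=\frac{\varepsilon}{\sigma p}$ match the paper, but your correctness argument has a genuine gap, and it sits exactly at the point you flag as ``the main subtle point'' without resolving it. You claim that for every $\bar x\in X$ there is an index vector $(i_1,\dots,i_p)$ \emph{processed by the algorithm} whose budgets satisfy $b_j\le f_j(\bar x)\le(1+\varepsilon')\cdot b_j$ for all~$j$. This is false: the bracketing exponents for~$\bar x$ are $t_j=\max\{i:\LB(j)(1+\varepsilon')^{i}\le f_j(\bar x)\}$, and in general $\min_j t_j>0$, so this index vector lies in the interior of the box and is never enumerated --- as you yourself note in the running-time paragraph, the algorithm only visits the union of slabs in which at least one coordinate is pinned to~$0$. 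You then frame the remaining work as verifying ``coverage-without-overlap'' of grid points, but the enumeration deliberately does \emph{not} cover the full grid, so that verification would fail.

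The missing idea, which is the heart of the paper's proof, is scale-invariance of the weighted sum scalarization: the weight vector~$w$ with $w_j=1/\bigl(\LB(j)(1+\varepsilon')^{t_j}\bigr)$ is a positive scalar multiple (by $(1+\varepsilon')^{\min_k t_k}$) of the weight vector~$w'$ with exponents $t_j-\min_k t_k$, so $\Pi^{\WS}(w)$ and $\Pi^{\WS}(w')$ have proportional objectives and any $\sigma$-approximation for one is a $\sigma$-approximation for the other. Hence for the (unprocessed) bracketing budgets~$b$ of~$\bar x$ one applies Proposition~\ref{prop_1WS} with that~$b$ but with the solution $\WS_\sigma(w')$ that the algorithm actually computed for the projected weight~$w'$; the proof of the proposition only uses that this solution $\sigma$-approximates $\Pi^{\WS}(w)$. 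This equivalence is precisely what licenses restricting to index vectors with a zero coordinate, and it is also the reason the count drops from $\prod_j(u_j+1)$ to $\sum_i\prod_{j\neq i}$. Once you insert this observation, the rest of your argument (the substitution $(1+\varepsilon')\sigma p=\sigma p+\varepsilon$, the counting of index vectors, and the estimate $\ln(1+\varepsilon')\ge\varepsilon'/2$) goes through essentially as in the paper.
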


\begin{proof}
In order to approximate all feasible solutions, we can iteratively apply Proposition~\ref{prop_1WS} with $\varepsilon' \colonequals \frac{\varepsilon}{\sigma\cdot p}$ instead of $\varepsilon$, leading to the modified constraint on the sum of the~$\alpha_j$ where the right-hand side becomes $(1+\varepsilon')\cdot\sigma\cdot p = \sigma\cdot p\ + \varepsilon$. More precisely, we iterate with $b_j = LB(j)\cdot(1+ \varepsilon')^{i_j}$ and $i_j=0,\ldots,u_j$, where $u_j$ is the largest integer such that $LB(j)\cdot(1+\varepsilon')^{u_j} \leq UB(j)$, for each $j\in\{1,\ldots,p\}$. Actually, this iterative application of Proposition~\ref{prop_1WS} involves redundant weight vectors. More precisely, consider a weight vector $w = (w_1,\ldots,w_p)$ where $w_j= \frac{1}{b_j}$ with $b_j = LB(j)\cdot(1+\varepsilon')^{t_j}$ for $j=1,\ldots,p$, and let~$k$ be an index such that $t_k  = \min_{j=1,\ldots,p} t_j$. Then problem~$\Pi^{\WS}(w)$ is equivalent to problem~$\Pi^{\WS}(w')$ with $w'_j =\frac{1}{b'_j}$, where $b'_j = LB(j)\cdot(1+\varepsilon')^{t_j - t_k}$ for $j=1,\ldots,p$. Therefore, it is sufficient to consider all weight vectors~$w$ for which at least one component~$w_k$ is set to $\frac{1}{LB(k)}$ (see Figure~\ref{fig:approx-grid} for an illustration). The running time follows.
\end{proof}

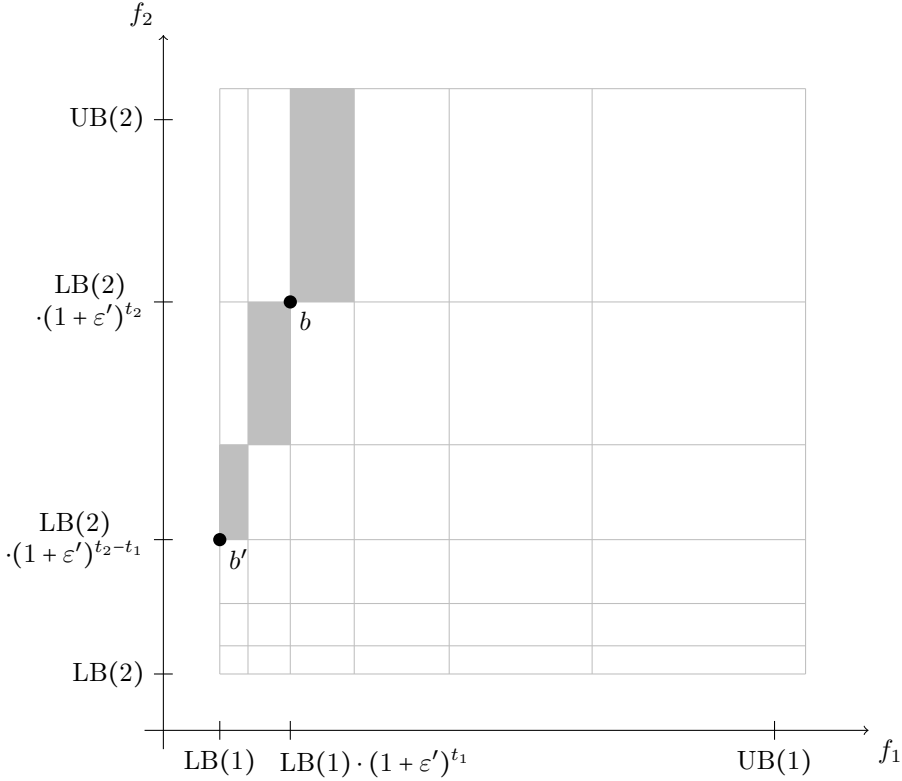
\begin{figure}[ht!]
\begin{center}
	\begin{tikzpicture}[scale=1.25]
	\draw[->] (-0.2,0) -- (7.5,0) node[below right] {$f_1$};
	\draw[->] (0,-0.2) -- (0,7.4) node[above left] {$f_2$};
	\foreach \x in {0.6, 0.9, 1.35, 2.03, 3.04, 4.56, 6.83}{
		\draw[-,thin,color=gray!50!white] (\x,0.6) -- (\x,6.83) node[below] {};
		\draw[-,thin,color=gray!50!white] (0.6,\x) -- (6.83,\x) node[below] {};
	}
    \fill [gray!50] (0.6,2.03) rectangle (0.9,3.04);
    \fill [gray!50] (0.9,3.04) rectangle (1.35,4.56);
    \fill [gray!50] (1.35,4.56) rectangle (2.03,6.83);
    \fill (1.35,4.56) circle (2pt) node[below right] {$b$};
    \fill (0.6,2.03) circle (2pt) node[below right] {$b'$};
	\draw[-] (0.6,0.1) -- (0.6,-0.1) node[below] {$\LB(1)$};
 	\draw[-] (1.35,0.1) -- (1.35,-0.1) node[below right] {\hspace{-0.75em}$\LB(1)\cdot(1+\varepsilon')^{t_1}$};
	\draw[-] (6.5,0.1) -- (6.5,-0.1) node[below] {$\UB(1)$};
	\draw[-] (0.1,0.6) -- (-0.1,0.6) node[left] {$\LB(2)$};
	\draw[-] (0.1,2.03) -- (-0.1,2.03) node[left,align=center] {$\LB(2)$\\$\cdot(1+\varepsilon')^{t_2-t_1}$};
    \draw[-] (0.1,4.56) -- (-0.1,4.56) node[left,align=center] {$\LB(2)$\\$\cdot(1+\varepsilon')^{t_2}$};
	\draw[-] (0.1,6.5) -- (-0.1,6.5) node[left] {$\UB(2)$};
	\end{tikzpicture}
    \caption{Weight vectors and subdivision of the objective space in Algorithm~\ref{alg:mainAlgo}. The weight vector~$w=(\frac{1}{b_1},\dots,\frac{1}{b_p})$ with $b_j = LB(j)\cdot(1+\varepsilon')^{t_j}$ for $j=1,\ldots,p$ is equivalent to the weight vector $w'=(\frac{1}{b'_1},\dots,\frac{1}{b'_p})$ obtained by reducing all exponents~$t_j$ by their minimum. The solution~$\WS_{\sigma}(w')$ returned for~$w'$ is then used to approximate all solutions with images in the shaded (hyper-) rectangles.}\label{fig:approx-grid}
\end{center}
\end{figure}

Note that, depending on the structure of the weighted sum algorithm~$\WS_{\sigma}$, the practical running time of Algorithm~\ref{alg:mainAlgo} could be improved by not solving every weighted sum problem from scratch, but using the information obtained in previous iterations.

Also note that, as illustrated in Figure~\ref{fig:approx-grid}, Algorithm~\ref{alg:mainAlgo} also directly yields a subdivision of the objective space into hyperrectangles such that all solutions whose images are in the same hyperrectangle are approximated by the same solution (possibly with different approximation guarantees): For each weight vector~$w=(\frac{1}{b_1},\dots,\frac{1}{b_p})$ considered in the algorithm (where $b_k=\LB(k)$ for at least one~$k$), all solutions~$\bar{x}$ with images in the hyperrectangles $\bigtimes_{j=1}^p \left[b_j\cdot(1+\varepsilon')^\ell,b_j\cdot(1+\varepsilon')^{\ell+1}\right]$ for $\ell=0,1,\dots$ are approximated by the solution returned by~$\WS_{\sigma}(w)$.

\medskip

\noindent
When the weighted sum problem can be solved exactly in polynomial time, Theorem~\ref{thm:main-result} immediately yields the following result:

\begin{corollary}\label{cor:main-result-special}
If $\WS_{\sigma}=\WS_{1}$ is an exact algorithm for the weighted sum problem, Algorithm~\ref{alg:mainAlgo}  outputs an $\mathcal{A}$-approximation where
\begin{align*}
\mathcal{A} = & \{(\alpha_1,\dots,\alpha_p) : \alpha_1,\dots,\alpha_p \geq 1, \alpha_i=1 \text{ for at least one}~i, \mbox{ and } \sum_{j:\alpha_j>1} \alpha_j = p + \varepsilon\}
\end{align*}
in time 
$\mathcal{O}\left(T_{\WS_{1}} \left(\frac{1}{\varepsilon} \log\frac{\UB}{\LB}\right)^{p-1} \right)$.
\end{corollary}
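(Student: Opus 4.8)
The plan is to derive this corollary directly from Theorem~\ref{thm:main-result} by specializing to the exact case $\sigma = 1$. First I would observe that Algorithm~\ref{alg:mainAlgo} calls the weighted sum algorithm $\WS_\sigma$ only as a black box, so running it with the exact algorithm $\WS_1$ (i.e., $\sigma = 1$) leaves the structure of the algorithm and the validity of Theorem~\ref{thm:main-result} untouched; it only fixes the concrete values of the parameters appearing in the conclusion.

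The one point that needs a brief argument is the index condition defining $\mathcal{A}$. In Theorem~\ref{thm:main-result} this condition reads ``$\alpha_i \leq \sigma$ for at least one $i$''. For $\sigma = 1$ it becomes $\alpha_i \leq 1$, and since every component of every vector in $\mathcal{A}$ already satisfies $\alpha_i \geq 1$, the two inequalities together force $\alpha_i = 1$. Hence the condition collapses to ``$\alpha_i = 1$ for at least one $i$'', exactly as stated, so the resulting set $\mathcal{A}$ coincides with the one from Theorem~\ref{thm:main-result} evaluated at $\sigma = 1$ and the $\mathcal{A}$-approximation property transfers verbatim.

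The remaining substitutions are purely mechanical: the sum constraint $\sum_{j:\alpha_j>1}\alpha_j = \sigma\cdot p + \varepsilon$ becomes $\sum_{j:\alpha_j>1}\alpha_j = p + \varepsilon$, and the running-time bound $\mathcal{O}\left(T_{\WS_\sigma}\left(\frac{\sigma}{\varepsilon}\log\frac{\UB}{\LB}\right)^{p-1}\right)$ becomes $\mathcal{O}\left(T_{\WS_1}\left(\frac{1}{\varepsilon}\log\frac{\UB}{\LB}\right)^{p-1}\right)$. Since every ingredient is inherited from Theorem~\ref{thm:main-result}, there is no genuine obstacle here; the only step demanding any thought is the collapse of the inequality $\alpha_i \leq 1$ to the equality $\alpha_i = 1$, which I would state explicitly rather than leave implicit.
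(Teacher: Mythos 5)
Your proposal is correct and matches the paper exactly: the paper derives this corollary by direct substitution of $\sigma=1$ into Theorem~\ref{thm:main-result} (it states that the theorem ``immediately yields'' the result, with no further argument). Your explicit note that $\alpha_i\leq 1$ combined with $\alpha_i\geq 1$ forces $\alpha_i=1$ is the only non-mechanical observation, and it is the right one.
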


Another special case worth mentioning is the situation where the weighted sum problem admits a polynomial-time approximation scheme. Here, similar to the case in which an exact algorithm is available for the weighted sum problem (see Corollary~\ref{cor:main-result-special}), we can still obtain a set of vectors~$\alpha$ of approximation factors with $\sum_{j:\alpha_j>1}\alpha_j=p+\varepsilon$ while only losing the property that at least one $\alpha_i$ equals~$1$.

\begin{corollary}\label{cor:weighted-sum-PTAS}
If the weighted sum problem admits a polynomial-time $(1+\tau)$-approximation for any $\tau>0$, then, for any $\varepsilon>0$ and any $0<\tau<\frac{\varepsilon}{p}$, Algorithm~\ref{alg:mainAlgo} can be used to compute an $\mathcal{A}$-approximation where
\begin{align*}
\mathcal{A} = & \{(\alpha_1,\dots,\alpha_p) : \alpha_1,\dots,\alpha_p \geq 1, \alpha_i\leq 1+\tau \text{ for at least one}~i, \mbox{ and } \sum_{j:\alpha_j>1} \alpha_j = p + \varepsilon\}
\end{align*}
in time 
$\mathcal{O}\left(T_{\WS_{1+\tau}} \left(\frac{1+\tau}{\varepsilon-\tau\cdot p} \log\frac{\UB}{\LB}\right)^{p-1} \right)$.
\end{corollary}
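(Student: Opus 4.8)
The plan is to obtain this corollary as a direct reparametrization of Theorem~\ref{thm:main-result}. The given polynomial-time $(1+\tau)$-approximation algorithm plays exactly the role of $\WS_\sigma$ with $\sigma = 1+\tau$, so Algorithm~\ref{alg:mainAlgo} can be run with this algorithm as its weighted sum subroutine. The crucial point is that Theorem~\ref{thm:main-result} produces an approximating set whose defining sum is $\sigma\cdot p + \varepsilon$, where here $\varepsilon$ denotes the parameter supplied to the algorithm. To force this sum to equal the target value $p+\varepsilon$, I would invoke the algorithm not with $\varepsilon$ but with the adjusted parameter $\tilde\varepsilon \colonequals \varepsilon - \tau\cdot p$.

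First I would confirm that this substitution is legitimate, i.e., that $\tilde\varepsilon > 0$, since otherwise the loop ranges $i_j \in \{0,\ldots,u_j\}$ in Algorithm~\ref{alg:mainAlgo} (which implicitly require a positive granularity $\varepsilon' = \frac{\tilde\varepsilon}{\sigma p}$) would be ill-defined. This positivity is guaranteed precisely by the hypothesis $0 < \tau < \frac{\varepsilon}{p}$, which is equivalent to $\varepsilon - \tau p > 0$. With $\sigma = 1+\tau$ and input parameter $\tilde\varepsilon$, Theorem~\ref{thm:main-result} then yields an $\mathcal{A}$-approximation satisfying $\sum_{j:\alpha_j>1}\alpha_j = \sigma\cdot p + \tilde\varepsilon = (1+\tau)\,p + (\varepsilon - \tau p) = p + \varepsilon$, exactly as claimed. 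The accompanying structural property that $\alpha_i \leq \sigma = 1+\tau$ for at least one index $i$ transfers verbatim from Theorem~\ref{thm:main-result} and matches the corresponding condition in the statement.

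Finally I would read off the running time by substituting $\sigma = 1+\tau$ and $\tilde\varepsilon = \varepsilon - \tau p$ into the bound $\mathcal{O}\bigl(T_{\WS_\sigma}\,(\frac{\sigma}{\varepsilon}\log\frac{\UB}{\LB})^{p-1}\bigr)$ of Theorem~\ref{thm:main-result}, which immediately gives the asserted complexity $\mathcal{O}\bigl(T_{\WS_{1+\tau}}\,(\frac{1+\tau}{\varepsilon-\tau p}\log\frac{\UB}{\LB})^{p-1}\bigr)$. Since the entire argument is a clean parameter substitution into an already-established theorem, I do not expect any substantive obstacle; the only delicate point is verifying that the admissibility constraint $\tilde\varepsilon > 0$ coincides exactly with the stated range $0 < \tau < \frac{\varepsilon}{p}$, ensuring the reparametrization is meaningful rather than vacuous.
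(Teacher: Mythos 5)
Your proposal is correct and follows exactly the paper's own argument: the paper's proof is the one-line instruction to run Algorithm~\ref{alg:mainAlgo} with $\sigma = 1+\tau$ and parameter $\varepsilon - \tau\cdot p$, and you carry out precisely this substitution, additionally verifying the positivity condition and the resulting sum $p+\varepsilon$. No issues.
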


\begin{proof}
Given $\varepsilon>0$ and $0<\tau<\frac{\varepsilon}{p}$, apply Algorithm~\ref{alg:mainAlgo} with $\varepsilon-\tau\cdot p$ and $\sigma\colonequals 1+\tau$.
\end{proof}

Since any component of a vector in the set~$\mathcal{A}$ from Theorem~\ref{thm:main-result} can get arbitrarily close to $\sigma \cdot p + \varepsilon$ in the worst case, the best ``classical'' approximation result using only a single vector of approximation factors that is obtainable from Theorem~\ref{thm:main-result} reads as follows:

\begin{corollary}\label{cor:classical-result}
Algorithm~\ref{alg:mainAlgo} computes a $(\sigma \cdot p + \varepsilon,\dots,\sigma \cdot p + \varepsilon)$-approximation in time 
$\mathcal{O}\left(T_{WS_{\sigma}} \left(\frac{1}{\varepsilon} \log\frac{\UB}{\LB}\right)^{p-1} \right)$.
\end{corollary}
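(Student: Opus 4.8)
The plan is to obtain this classical, single-vector guarantee as a direct coarsening of the multi-factor statement already proved in Theorem~\ref{thm:main-result}. The point is that the set~$\mathcal{A}$ produced by Algorithm~\ref{alg:mainAlgo} consists entirely of vectors that are dominated, componentwise, by $(\sigma\cdot p+\varepsilon,\dots,\sigma\cdot p+\varepsilon)$; once this is established, the equivalence is purely a matter of monotonicity of the approximation relation.

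First I would invoke Theorem~\ref{thm:main-result} to record that Algorithm~\ref{alg:mainAlgo} outputs an $\mathcal{A}$-approximation with $\mathcal{A}=\{(\alpha_1,\dots,\alpha_p):\alpha_j\geq1,\ \alpha_i\leq\sigma\text{ for at least one }i,\ \sum_{j:\alpha_j>1}\alpha_j=\sigma\cdot p+\varepsilon\}$, together with the stated running time. Next I would fix an arbitrary $\alpha\in\mathcal{A}$ and an index~$j$ and bound $\alpha_j$ coordinatewise: if $\alpha_j=1$, then $\alpha_j\leq\sigma\cdot p+\varepsilon$ holds trivially (since $\sigma\geq1$, $p\geq2$, and $\varepsilon>0$); and if $\alpha_j>1$, then $\alpha_j$ is one of the summands appearing in $\sum_{j:\alpha_j>1}\alpha_j=\sigma\cdot p+\varepsilon$, so, as all these summands are strictly positive, no individual one can exceed the total, giving again $\alpha_j\leq\sigma\cdot p+\varepsilon$. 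Hence every $\alpha\in\mathcal{A}$ satisfies $\alpha\leq(\sigma\cdot p+\varepsilon,\dots,\sigma\cdot p+\varepsilon)$ componentwise.

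Finally I would translate this back into the language of Definition~\ref{def:approx-standard}. Given any feasible solution~$x'$, the $\mathcal{A}$-approximation supplies some $x\in P$ and some $\alpha\in\mathcal{A}$ with $f_j(x)\leq\alpha_j\cdot f_j(x')$ for all~$j$; combining this with $\alpha_j\leq\sigma\cdot p+\varepsilon$ yields $f_j(x)\leq(\sigma\cdot p+\varepsilon)\cdot f_j(x')$ for all~$j$, so~$x$ $(\sigma\cdot p+\varepsilon,\dots,\sigma\cdot p+\varepsilon)$-approximates~$x'$, and hence~$P$ is the claimed approximation. The running time is inherited directly from Theorem~\ref{thm:main-result}. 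There is essentially no analytic obstacle here: the only subtlety, and the one step I would make sure to state explicitly, is that each active factor $\alpha_j>1$ is bounded by the whole active sum, which relies on the positivity of all summands\,---\,guaranteed by the very condition $\alpha_j>1$ defining the index set over which the sum is taken.
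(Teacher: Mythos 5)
Your proposal is correct and matches the paper's intended argument exactly: the paper derives this corollary from Theorem~\ref{thm:main-result} by the same observation that every component of every vector in~$\mathcal{A}$ is bounded by $\sigma\cdot p+\varepsilon$ (being either equal to~$1$ or a positive summand of a sum totalling $\sigma\cdot p+\varepsilon$), so the single-vector guarantee is a coarsening of the multi-factor one. The only cosmetic remark is that the corollary's stated running time drops the factor~$\sigma$ appearing in the bound of Theorem~\ref{thm:main-result}, which is immaterial when~$\sigma$ is a constant.
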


\subsection{Biobjective Problems}\label{subsec:biobjective}

In this subsection, we focus on biobjective minimization problems. We first specialize some of the general results of the previous subsection to the case $p=2$. Afterwards, we propose a specific approximation algorithm for biobjective problems, which significantly improves upon the running time of Algorithm~\ref{alg:mainAlgo} in the case where an exact algorithm~$\WS_1$ for the weighted sum problem is available.


\medskip

Theorem~\ref{thm:main-result}, which is the main general result of the previous subsection, can trivially be specialized to the case $p=2$. It is more interesting to consider the situation where the weighted sum can be solved exactly, corresponding to Corollary~\ref{cor:main-result-special}. In that case, we obtain the following result:

\begin{corollary}\label{cor:main-result-special-biobjective}
If $\WS_{\sigma}=\WS_{1}$ is an exact algorithm for the weighted sum problem and $p=2$, Algorithm~\ref{alg:mainAlgo} yields an $\mathcal{A}$-approximation where
\begin{align*}
\mathcal{A} = & \{(1,2 + \varepsilon), (2 + \varepsilon, 1) \}
\end{align*}
in time $\mathcal{O}\left(T_{WS_{1}} \frac{1}{\varepsilon} \log\frac{\UB}{\LB} \right)$.
\end{corollary}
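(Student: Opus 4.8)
The plan is to specialize Corollary~\ref{cor:main-result-special} to the case $p=2$ and read off both the approximation set~$\mathcal{A}$ and the running time. First I would observe that for $p=2$, the set~$\mathcal{A}$ from Corollary~\ref{cor:main-result-special} is
\begin{align*}
\mathcal{A} = \{(\alpha_1,\alpha_2) : \alpha_1,\alpha_2\geq 1,\ \alpha_i=1 \text{ for at least one } i,\ \textstyle\sum_{j:\alpha_j>1}\alpha_j = 2+\varepsilon\}.
\end{align*}
The key step is to enumerate which vectors actually satisfy these three constraints simultaneously. Since at least one coordinate must equal~$1$, there are (up to the degenerate case) two situations: either $\alpha_1=1$, or $\alpha_2=1$. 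Suppose $\alpha_1=1$ and $\alpha_2>1$; then the only index~$j$ with $\alpha_j>1$ is $j=2$, so the sum condition forces $\alpha_2 = 2+\varepsilon$, giving the vector $(1,2+\varepsilon)$. Symmetrically, $\alpha_2=1$ and $\alpha_1>1$ forces $\alpha_1=2+\varepsilon$, giving $(2+\varepsilon,1)$.

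The remaining case to rule out is that \emph{both} coordinates equal~$1$. Then the index set $\{j:\alpha_j>1\}$ is empty, so the sum on the left-hand side is the empty sum, equal to~$0$, which cannot equal $2+\varepsilon>0$; hence this case does not contribute. Therefore the only vectors in~$\mathcal{A}$ are $(1,2+\varepsilon)$ and $(2+\varepsilon,1)$, which is exactly the claimed set. For the running time, I would simply substitute $p=2$ and $\sigma=1$ into the time bound $\mathcal{O}(T_{\WS_1}(\frac{1}{\varepsilon}\log\frac{\UB}{\LB})^{p-1})$ from Corollary~\ref{cor:main-result-special}; with $p-1=1$ the exponent disappears and the bound becomes $\mathcal{O}(T_{\WS_1}\,\frac{1}{\varepsilon}\log\frac{\UB}{\LB})$, as stated.

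I do not anticipate a genuine obstacle here, since the result is a direct specialization of an already-proved corollary; the only point requiring a moment of care is the bookkeeping in the enumeration, namely ensuring that the worst-case vectors $(1,2+\varepsilon)$ and $(2+\varepsilon,1)$ are the exhaustive list and that the all-ones vector is correctly excluded by the positivity of the right-hand side $2+\varepsilon$. One could also note, as a sanity check consistent with the motivating discussion around Figure~\ref{fig:multi-factor-motivation}, that these two vectors are precisely the refined guarantees $(1,2+\varepsilon)$ and $(2+\varepsilon,1)$ promised there, confirming that the multi-factor set~$\mathcal{A}$ captures the geometry illustrated in that figure.
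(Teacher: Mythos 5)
Your proposal is correct and matches the paper's (implicit) argument exactly: the paper obtains this corollary as a direct specialization of Corollary~\ref{cor:main-result-special} to $p=2$, where the constraints $\alpha_i=1$ for at least one $i$ and $\sum_{j:\alpha_j>1}\alpha_j=2+\varepsilon$ force $\mathcal{A}=\{(1,2+\varepsilon),(2+\varepsilon,1)\}$, and the running time follows by setting $p-1=1$. Your careful exclusion of the all-ones vector via the empty sum is a correct piece of bookkeeping that the paper leaves unstated.
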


It is worth pointing out that, unlike for the previous results, the set $\mathcal{A}$ of approximation factors is now \emph{finite}. This type of result can be interpreted as a \emph{disjunctive} approximation result: Algorithm~\ref{alg:mainAlgo} outputs a set~$P$ ensuring that, for any $x \in X$, there exists $x' \in P$ such that~$x'$ $(1,2 + \varepsilon)$-approximates~$x$ or~$x'$ $(2 + \varepsilon, 1)$-approximates~$x$.




In the biobjective case, we may scale the weights in the weighted sum problem to be of the form~$(\gamma,1)$ for some $\gamma>0$. In the following, we make use of this observation and refer to a weight vector~$(\gamma,1)$ simply as~$\gamma$. 

\smallskip


Algorithm~\ref{alg:biobjAlg} is a refinement of Algorithm~\ref{alg:mainAlgo} in the biobjective case when an exact algorithm~$\WS_1$ for the weighted sum problem is available. Algorithm~\ref{alg:mainAlgo} requires to test all the $u_1+u_2+1$ weights $\left(\frac{1}{LB(1)},\frac{1}{LB(2)(1+\varepsilon')^{u_2}} \right)$, $\left(\frac{1}{LB(1)},\frac{1}{LB(2)(1+\varepsilon')^{u_2-1}} \right)$, \ldots, $\left(\frac{1}{LB(1)},\frac{1}{LB(2)} \right)$, $\left(\frac{1}{LB(1)(1+\varepsilon')},\frac{1}{LB(2)} \right)$,\ldots, $\left(\frac{1}{LB(1)(1+\varepsilon')^{u_1}},\frac{1}{LB(2)} \right)$, or equivalently the $u_1+u_2+1$ weights of the form $(\gamma_t,1)$, where $\gamma_t=\frac{LB(2)}{LB(1)}(1+\varepsilon')^{u_2-t+1}$ for $t=1, \ldots, u_1+u_2+1$. Instead of testing all these weights, Algorithm~\ref{alg:biobjAlg} considers only a subset of these weights. More precisely, in each iteration, the algorithm selects a subset of consecutive weights $\{\gamma_{\ell},\ldots,\gamma_r\}$, solves $\WS_{1}(\gamma_t)$ for the weight~$\gamma_t$ with $t=\lfloor\frac{\ell+r}{2}\rfloor$, and decides whether 0, 1, or 2 of the subsets $\{\gamma_{\ell},\ldots,\gamma_t\}$ and $\{\gamma_t,\ldots,\gamma_r\}$ need to be investigated further. This process can be viewed as developing a binary tree where the root, which corresponds to the initialization, requires solving two weighted sum problems, while each other node requires solving one weighted sum problem. This representation is useful to bound the running time of our algorithm. The following technical result on binary trees, whose proof is given in the appendix, will be useful for this purpose:

\begin{lemma}\label{lem:tree-size}
A binary tree with height~$h$ and $k$~nodes with two children contains $\mathcal{O}(k\cdot h)$ nodes.
\end{lemma}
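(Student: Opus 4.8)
The plan is to bound the number of nodes in a binary tree of height~$h$ by grouping nodes according to which ``branching node'' (i.e., node with two children) lies above them. First I would observe that every node with at most one child lies on a maximal chain of such nodes that is ``hung'' below some branching node (or below the root). More precisely, I would classify the nodes into three types: nodes with two children, and the two kinds of degree-$\leq 1$ nodes (leaves and nodes with exactly one child). The key structural fact is that any maximal path consisting solely of nodes each having at most one child forms a simple downward chain, and such a chain has length at most~$h$ since the tree has height~$h$.

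The main step is then a counting argument. I would argue that each such unary chain can be associated injectively to the branching node (or root) immediately above its topmost node. Since there are~$k$ branching nodes, each of which has exactly two children and hence gives rise to at most two such chains hanging below it, together with at most one chain hanging below the root, the total number of unary chains is~$\mathcal{O}(k)$. Each chain contains at most~$h$ nodes by the height bound, so the total number of degree-$\leq 1$ nodes is~$\mathcal{O}(k\cdot h)$. Adding the~$k$ branching nodes themselves gives a total of~$k + \mathcal{O}(k\cdot h) = \mathcal{O}(k\cdot h)$ nodes, as claimed.

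An alternative, perhaps cleaner, approach is to delete the branching nodes and analyze the resulting forest of paths. Removing the~$k$ branching nodes from the tree leaves a collection of vertex-disjoint simple paths (each a maximal unary chain); there are at most~$2k+1$ such paths (the branching nodes together have at most~$2k$ children, plus the component containing the root), and each has at most~$h$ vertices. This yields the same bound $\mathcal{O}(k\cdot h)$ on the non-branching nodes, and then adding back the~$k$ branching nodes preserves the bound.

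The part requiring the most care will be verifying the bound on the number of unary chains, since one must correctly account for the root as a possible endpoint and ensure the association of chains to branching nodes (or to the root) is genuinely injective and covers all chains. Once this combinatorial bookkeeping is set up correctly, the height bound $h$ on each chain's length and the count~$k$ of branching nodes combine immediately to give the desired~$\mathcal{O}(k\cdot h)$ estimate.
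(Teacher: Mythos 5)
Your proof is correct, and it takes a genuinely different route from the paper's. The paper uses an extremal (exchange) argument: it considers a tree maximizing the number of nodes among all binary trees of height~$h$ with $k$~branching nodes, shows by a subtree-relocation argument that in such a tree all branching nodes must be packed into the topmost levels, and then counts nodes level by level, obtaining a bound of the form $k + (k+1) + 2(k+1)\bigl(h-\lfloor\log_2(k+1)\rfloor\bigr)$. You instead decompose an \emph{arbitrary} such tree directly: remove the $k$ branching nodes, observe that what remains is a disjoint union of downward unary chains, count the chains by their topmost endpoints (each is the root or a child of a branching node, giving at most $2k+1$ chains), and bound each chain's length by the height. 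Your counting of the chains is sound --- each component of the forest obtained after deleting branching nodes is indeed a simple downward path, since every surviving node has at most one child and at most one parent within the component --- and the association of chains to the root or to branching-node children is injective and exhaustive. Your approach is more elementary and avoids the somewhat delicate exchange step in the paper (where one must verify that relocating a subtree preserves the height bound and the branching-node count before appending an extra node); the paper's approach, in exchange, identifies the exact shape of the extremal tree and hence could yield a sharper explicit constant, but for the $\mathcal{O}(k\cdot h)$ statement this extra precision is not needed. Both arguments share the same harmless boundary quirk at $k=0$, where the tree is a single path of $\mathcal{O}(h)$ nodes.
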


\begin{algorithm2e}
\SetKw{Compute}{compute}
\SetKw{Break}{break}
\SetKwInOut{Input}{input}\SetKwInOut{Output}{output}
\SetKwComment{command}{right mark}{left mark}

\Input{an instance of a  biobjective minimization problem~$\Pi$, $\varepsilon >0$, an exact algorithm $\WS_{1}$ for the weighted sum problem}

\Output{a $\{(1,2+\varepsilon),(2+\varepsilon,1)\}$-approximation~$P$ for problem~$\Pi$}

\BlankLine
$\varepsilon' \leftarrow \frac{\varepsilon}{p}$\\
$u_1 \leftarrow \mbox{largest~integer~ such~that~} LB(1) (1+\varepsilon')^{u_1}\leq UB(1)$\\
$u_2 \leftarrow \mbox{largest~integer~ such~that~} LB(2) (1+\varepsilon')^{u_2}\leq UB(2)$\\
$\gamma_1\leftarrow \frac{LB(2)}{\LB(1)} (1+\varepsilon')^{u_2}$;\qquad
$\gamma_{u_1+u_2+1}\leftarrow \frac{LB(2)}{\LB(1)} (1+\varepsilon')^{-u_1} $\\
$x^1 \leftarrow \WS_{1}(\gamma_1)$ ;\qquad
$x^{u_1+u_2+1} \leftarrow \WS_{1}(\gamma_{u_1+u_2+1})$\\
 $Q \leftarrow  \emptyset$\\
\lIf{$x^1$  $(1,2+\varepsilon)$-approximates~$x^{u_1+u_2+1}$} {$P \leftarrow \{x^1\}$}
\Else{\lIf{$x^{u_1+u_2+1}$ $(2+\varepsilon,1)$-approximates~$x^1$}{$P \leftarrow \{x^{u_1+u_2+1}\}$}
	\Else{$P \leftarrow \{x^1,x^{u_1+u_2+1}\}$\\
       	   $Q \leftarrow \{(1,u_1+u_2+1)\}$}
	}
\While{$Q \neq \emptyset$}{
	Select $(\ell,r)$ from~$Q$\\
    $Q \leftarrow Q \setminus \{(\ell,r)\}$\\
    $t \leftarrow \lfloor \frac{\ell+r}{2} \rfloor $\\
    $\gamma_t=\frac{LB(2)}{LB(1)}(1+\varepsilon')^{u_2-t+ 1}$ \\
     $x^t \leftarrow \WS_{1}(\gamma_t)$
     
   \If{$x^{\ell}$ does not $(1,2+\varepsilon)$-approximate~$x^{t}$ or~$x^{r}$ does not $(2+\varepsilon,1)$-approximate~$x^t$  \label{if1} }{
        	$P\leftarrow P\cup \{x^t\}$\\
            \If{$t\geq \ell+2$ and $x^{\ell}$ does not $(1,2+\varepsilon)$-approximate~$x^{t}$ and ~$x^{t}$ does not $(2+\varepsilon,1)$-approximate~$x^{\ell}$ \label{if2}} 				{$Q\leftarrow Q\cup \{(\ell,t)\}$ \label{if2then}}
        		\If{$t\leq r-2$ and $x^{t}$ does not $(1,2+\varepsilon)$-approximate~$x^{r}$ and ~$x^{r}$ does not $(2+\varepsilon,1)$-approximate~$x^{t}$ \label{if3}}   				{$Q\leftarrow Q\cup \{(t,r)\}$ \label{if3then}}
            }
     }
\Return $P$
\caption{A \emph{$\{(1,2+\varepsilon),(2+\varepsilon,1)\}$-approximation for  biobjective minimization problems}\label{alg:biobjAlg}}
\end{algorithm2e}

\begin{theorem}\label{thm:binary-sear-approx} For a  biobjective minimization problem, Algorithm~\ref{alg:biobjAlg} returns a $\{(1,2+\varepsilon),(2+\varepsilon,1)\}$-approximation in time
\begin{align*} 
\mathcal{O}\left(T_{\WS_1}\cdot\log \left(\frac{1}{\varepsilon}\cdot\log \frac{\UB}{\LB}\right) \cdot\log \frac{\UB}{\LB}  \right).
\end{align*}
\end{theorem}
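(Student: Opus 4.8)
The plan is to establish correctness by comparing the output~$P$ with the full family of grid solutions from Corollary~\ref{cor:main-result-special-biobjective}, and to bound the running time by modelling the recursion as a binary tree and applying Lemma~\ref{lem:tree-size}. I write the weight $(\gamma,1)$ simply as~$\gamma$ and order the candidate weights $\gamma_1>\gamma_2>\dots>\gamma_N$ with $N\colonequals u_1+u_2+1$. Two structural facts set everything up. First, a \emph{monotonicity} property: for $\ell<r$ (so $\gamma_\ell>\gamma_r$), the standard exchange argument applied to the exact optima $x^\ell=\WS_1(\gamma_\ell)$ and $x^r=\WS_1(\gamma_r)$ gives $f_1(x^\ell)\le f_1(x^r)$ and $f_2(x^\ell)\ge f_2(x^r)$. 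Second, an \emph{invariant}: whenever a pair $(\ell,r)$ is inserted into~$Q$, both $x^\ell$ and $x^r$ are already in~$P$, and neither does $x^\ell$ $(1,2+\varepsilon)$-approximate $x^r$ nor does $x^r$ $(2+\varepsilon,1)$-approximate $x^\ell$; this is verified directly against the three insertion points (initialization, line~\ref{if2then}, line~\ref{if3then}).

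The heart of the correctness proof is a single \emph{covering lemma}: for any two index-consecutive members $x^a,x^b\in P$ (no computed solution has index strictly between $a$ and $b$), every feasible point~$\bar y$ whose associated grid weight~$\gamma_s$ (in the sense of Proposition~\ref{prop_1WS}) satisfies $a<s<b$ is $(1,2+\varepsilon)$- or $(2+\varepsilon,1)$-approximated by $x^a$ or $x^b$. Together with Corollary~\ref{cor:main-result-special-biobjective}, which already covers those $\bar y$ whose associated weight is itself computed, this shows that $P$ is a $\{(1,2+\varepsilon),(2+\varepsilon,1)\}$-approximation. The crucial tool is an \emph{optimality observation}: if $x$ is optimal for~$\gamma$ and $\bar y$ is feasible with $\bar y_2\le(1+\varepsilon')\gamma\,\bar y_1$ and $f_1(x)>\bar y_1$, then optimality forces $f_2(x)\le\bar y_2$ and $f_1(x)<(2+\varepsilon')\bar y_1$, so that $x$ $(2+\varepsilon,1)$-approximates~$\bar y$ (since $\varepsilon'=\varepsilon/2$); the symmetric statement holds with the two objectives exchanged.

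Using this observation, the covering lemma splits into three regimes for~$\bar y$. If $\bar y_1<f_1(x^a)$, the observation applied to~$x^a$ (whose weight dominates the matching weight up to the factor $1+\varepsilon'$) yields a $(2+\varepsilon,1)$-approximation; symmetrically if $\bar y_2<f_2(x^b)$. In the remaining regime $\bar y_1\ge f_1(x^a)$ and $\bar y_2\ge f_2(x^b)$ I invoke the condition under which the gap was finalized. If it was finalized by a direct endpoint test, then either $f_2(x^a)\le(2+\varepsilon)f_2(x^b)\le(2+\varepsilon)\bar y_2$, so $x^a$ $(1,2+\varepsilon)$-approximates~$\bar y$, or $f_1(x^b)\le(2+\varepsilon)f_1(x^a)\le(2+\varepsilon)\bar y_1$, so $x^b$ $(2+\varepsilon,1)$-approximates~$\bar y$. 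If instead the gap was finalized because the line~\ref{if1} test failed at a skipped midpoint~$x^t$ (so $f_2(x^a)\le(2+\varepsilon)f_2(x^t)$ and $f_1(x^b)\le(2+\varepsilon)f_1(x^t)$), then -- and this is the main obstacle -- since $x^t$ is an exact optimum and hence nondominated, $\bar y$ cannot strictly dominate it, so $f_1(x^t)\le\bar y_1$ or $f_2(x^t)\le\bar y_2$; the first case gives $f_1(x^b)\le(2+\varepsilon)\bar y_1$ and the second gives $f_2(x^a)\le(2+\varepsilon)\bar y_2$, so $x^b$ respectively $x^a$ covers~$\bar y$. The delicate part is orchestrating these cases so that the slack in $\varepsilon'=\varepsilon/2$ exactly absorbs the grid-rounding factor, and the extreme weights $\gamma_1,\gamma_N$ must be handled separately to guarantee that every feasible slope lies in range.

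For the running time I identify each iteration of the while-loop with a node of a binary tree, so that the number of calls to~$\WS_1$ equals the number of nodes plus the two initial calls. The height is $h=\mathcal{O}(\log N)=\mathcal{O}\!\left(\log\!\left(\tfrac1\varepsilon\log\tfrac{\UB}{\LB}\right)\right)$, since each step halves an index interval and $N=\mathcal{O}\!\left(\tfrac1\varepsilon\log\tfrac{\UB}{\LB}\right)$. By Lemma~\ref{lem:tree-size} it then suffices to bound the number~$k$ of nodes with two children. At such a node $(\ell,r)$ with midpoint~$t$, both line~\ref{if2} and line~\ref{if3} fire, which together with monotonicity gives $f_1(x^{t})>(2+\varepsilon)f_1(x^{\ell})$ and $f_1(x^{r})>(2+\varepsilon)f_1(x^{t})$. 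A short case analysis on whether the index intervals of two two-children nodes are nested or disjoint shows that the midpoint values $f_1(x^{t})$ taken at distinct two-children nodes are pairwise separated by a multiplicative factor larger than $2+\varepsilon$; as they all lie in $[\LB,\UB]$, there are only $\mathcal{O}\!\left(\log\tfrac{\UB}{\LB}\right)$ of them, so $k=\mathcal{O}\!\left(\log\tfrac{\UB}{\LB}\right)$. Hence the total number of nodes is $\mathcal{O}(k\,h)$, each costing~$T_{\WS_1}$, which yields the claimed bound $\mathcal{O}\!\left(T_{\WS_1}\cdot\log\!\left(\tfrac1\varepsilon\log\tfrac{\UB}{\LB}\right)\cdot\log\tfrac{\UB}{\LB}\right)$.
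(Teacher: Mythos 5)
Your running-time analysis is the same as the paper's: model the recursion as a binary tree of height $\mathcal{O}\left(\log\left(\frac{1}{\varepsilon}\log\frac{\UB}{\LB}\right)\right)$, observe that the solutions generated at nodes with two children are pairwise separated by a factor larger than $2+\varepsilon$ so that there are only $\mathcal{O}\left(\log\frac{\UB}{\LB}\right)$ such nodes, and invoke Lemma~\ref{lem:tree-size}. Your correctness argument, however, takes a genuinely different route. The paper argues at the level of the \emph{skipped weighted-sum solutions}: by monotonicity, if $x^{\ell}$ $(1,2+\varepsilon)$-approximates $x^{t}$ then it $(1,2+\varepsilon)$-approximates every $x^i$ with $\ell<i<t$, and these weights are declared ``not needed.'' You instead argue at the level of the \emph{feasible points}: every $\bar y$ whose grid weight falls strictly inside a finalized gap is shown directly to be covered by one of the two retained endpoint solutions, via your ``optimality observation'' and the nondominance of a discarded midpoint. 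This buys real rigor, since approximation factors do not compose transitively: that $x^{\ell}$ $(1,2+\varepsilon)$-approximates $x^i$ does not by itself imply that $x^{\ell}$ inherits $x^i$'s coverage with factors still in $\{(1,2+\varepsilon),(2+\varepsilon,1)\}$. Your three-regime case analysis, including the use of nondominance of the skipped midpoint in the remaining regime, is correct and is exactly what is needed to close this.

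The one genuine gap is the boundary case you explicitly defer (``the extreme weights $\gamma_1,\gamma_N$ must be handled separately''). This is not a routine check: your covering lemma requires \emph{both} endpoints of a gap to lie in~$P$, but the initialization of Algorithm~\ref{alg:biobjAlg} may discard $x^1$ or $x^{u_1+u_2+1}$ and return a singleton, and in that branch your regime-3 argument cannot be run -- indeed it fails. Take $\varepsilon=0.1$ and $Y=\{(1.5,400),\,(1,900),\,(3,1)\}$ with $\LB(1)=\LB(2)=1$ and $\UB(1)=\UB(2)=1000$: then $x^1=(1.5,400)$ and $x^{u_1+u_2+1}=(3,1)$, the latter $(2.1,1)$-approximates the former, so the pseudocode returns $P=\{(3,1)\}$, yet $(3,1)$ neither $(1,2.1)$- nor $(2.1,1)$-approximates the feasible point $(1,900)$ (which Algorithm~\ref{alg:mainAlgo} covers via the intermediate weight, returning $(1.5,400)$). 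The slack you rely on elsewhere -- that $f_1(x^1)$ may exceed $\bar y_1$ by a factor up to $2+\varepsilon'$ -- is precisely what breaks the composition here. The argument (and the algorithm) go through only if both extreme solutions are always kept in~$P$ and the initial tests are used solely to decide whether to recurse on $(1,u_1+u_2+1)$; you should make that repair explicit rather than postponing it.
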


\begin{proof}
The approximation guarantee of Algorithm~\ref{alg:biobjAlg} derives from Theorem~\ref{thm:main-result}. We just need to prove that the subset of weights used here is sufficient to preserve the approximation guarantee. 

\smallskip

In lines~\ref{if2}-\ref{if2then}, the weights~$\gamma_i$ for $i=\ell+1, \ldots t-1$ are not considered if $x^{\ell}$  $(1,2+\varepsilon)$-approximates~$x^{t}$ or if ~$x^t$ $(2+\varepsilon,1)$-approximates~$x^{\ell}$. We show that, indeed, these weights are not needed. 

\smallskip

To this end, first observe that any solution $x^i \colonequals \WS_{1}(\gamma_i)$ for $i\in\{\ell+1, \ldots t-1\}$ is such that 
\begin{align*}
	f_1(x^{\ell}) \leq f_1(x^i) \leq f_1(x^{t})\phantom{.} \quad & \text{ and } \\
    f_2(x^{\ell}) \geq f_2(x^i) \geq f_2(x^{t}). \quad & \quad
\end{align*}
since $\gamma_{\ell} > \gamma_i >\gamma_t$. Thus, if ~$x^{\ell}$ $(1,2+\varepsilon)$-approximates~$x^{t}$, we obtain 
\begin{align*}
	   f_2(x^{\ell}) \leq (2+\varepsilon)\cdot f_2(x^t) \leq (2+\varepsilon)\cdot f_2(x^i), \quad  
\end{align*}
which shows that~$x^{\ell}$ also $(1,2+\varepsilon)$-approximates~$x^i$. Therefore,~$x^i$ and the corresponding weight~$\gamma_i$ are not needed.

\smallskip
 
\noindent
Similarly, if~$x^t$ $(2+\varepsilon,1)$-approximates~$x^{\ell}$, we have
\begin{align*}
	f_1(x^t) \leq (2+\varepsilon)\cdot f_1(x^{\ell}) \leq (2+\varepsilon)\cdot f_1(x^i), \quad  
\end{align*}
which shows that~$x^t$ $(2+\varepsilon,1)$-approximates~$x^i$. Therefore~$x^i$ and the corresponding weight~$\gamma_i$ are again not needed.

\smallskip

In lines~\ref{if3}-\ref{if3then}, the weights~$\gamma_i$ for $i=t+1, \ldots, r-1$ are not considered if~$x^t$ $(1,2+\varepsilon)$-approximates~$x^{r}$ or if~$x^r$ $(2+\varepsilon,1)$-approximates~$x^{t}$ for similar reasons.

\smallskip

Also, in line~\ref{if1},~$x^t$ can be discarded and the weights $\gamma_i$ for $i=\ell+1, \ldots, r-1$ can be ignored if~$x^{\ell}$ $(1,2+\varepsilon)$-approximates~$x^{t}$ and~$x^r$ $(2+\varepsilon,1)$-approximates~$x^{t}$. Indeed, using similar arguments as before, we obtain that $x^{\ell}$ $(1,2+\varepsilon)$-approximates~$x^{i}$ for $i=\ell+1, \ldots, t$ and~$x^{r}$ $(2+\varepsilon, 1)$-approximates~$x^{i}$ for $i=t, \ldots, r-1$ in this case. Consequently, compared to Algorithm~\ref{alg:mainAlgo}, only superfluous weights are discarded in Algorithm~\ref{alg:biobjAlg} and the approximation guarantee follows by Theorem~\ref{thm:main-result}.

\bigskip

We now prove the claimed bound on the running time. Algorithm~\ref{alg:biobjAlg} explores a set of weights of cardinality $u_1+u_2+1$ = $\left\lfloor \log_{1+\varepsilon'}\frac{\UB(1)}{\LB(1)}\right\rfloor + \left\lfloor \log_{1+\varepsilon'}\frac{\UB(2)}{\LB(2)}\right\rfloor + 1$. The running time is obtained by bounding the number of calls to algorithm~$\WS_{1}$, which corresponds to the number of nodes of the binary tree implicitly developed by the algorithm. The height of this tree is $\log_2(u_1+u_2+1) \in \mathcal{O}\left(\log \left(\frac{1}{\varepsilon}\cdot\log \frac{\UB}{\LB}\right)\right)$.



\smallskip

In order to bound the number of nodes with two children in the tree, we observe that we generate such a node (i.e. add the pairs $(\ell,t)$ and $(t,r)$ to~$Q$) only if~$x^\ell$ does not $(1,2+\varepsilon)$-approximate~$x^t$ and $x^t$ does not $(2+\varepsilon,1)$-approximate~$x^{\ell}$, and also~$x^t$ does not $(1,2+\varepsilon)$-approximate~$x^r$ and~$x^r$ does not $(2+\varepsilon,1)$-approximate~$x^t$. Hence, whenever a node with two children is generated, the corresponding solution~$x^t$ does neither $(1,2+\varepsilon)$ nor $(2+\varepsilon,1)$-approximate any previously generated solution and vice versa, so their objective values in both of the two objective functions must differ by more than a factor~$(2+\varepsilon)$. Using that the $j$th objective value of any feasible solution is between~$\LB(j)$ and~$\UB(j)$, this implies that there can be at most
\begin{align*}
\min\left\{\log_{2+\varepsilon} \left(\frac{\UB(1)}{\LB(1)}\right) ; \log_{2+\varepsilon} \left( \frac{\UB(2)}{\LB(2)}\right) \right\} \in \mathcal{O}\left(\log \frac{\UB}{\LB}  \right)
\end{align*}
nodes with two children in the tree.

\smallskip

Using the obtained bounds on the height of the tree and the number of nodes with two children, Lemma~\ref{lem:tree-size} shows that the total number of nodes in the tree is
\begin{align*}
	\mathcal{O}\left(\log \left(\frac{1}{\varepsilon}\cdot\log \frac{\UB}{\LB}\right) \cdot\log \frac{\UB}{\LB}  \right),
\end{align*}
which proves the claimed bound on the running time.
\end{proof}

\subsection{Tightness results}\label{subsec:tightness}

When solving the weighted sum problem exactly, Co\-rol\-la\-ry~\ref{cor:main-result-special} states that Algorithm~\ref{alg:mainAlgo} obtains a set~$\mathcal{A}$ of approximation factors in which $\sum_{j:\alpha_j>1}\alpha_j=p+\varepsilon$ for each $\alpha=(\alpha_1,\dots,\alpha_p)\in\mathcal{A}$. 

The following theorem shows that this multi-factor approximation result is arbitrarily close to the best possible result obtainable by supported solutions:

\begin{theorem}\label{thm:inapprox-min}
For $\varepsilon>0$, let
\begin{align*}
	\mathcal{A}\colonequals\{\alpha\in\mathbb{R}^p: \alpha_1,\dots,\alpha_p\geq 1,\; \alpha_i=1 \text{ for at least one}~i, \text{ and } \sum_{j:\alpha_j>1}\alpha_j=p-\varepsilon\}.
\end{align*}
Then there exists an instance of a $p$-objective minimization problem for which the set~$\XS$ of supported solutions is not an $\mathcal{A}$-approximation.
\end{theorem}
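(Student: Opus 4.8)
The plan is to exhibit a single explicit instance with a finite feasible set consisting of $p+1$ points: one ``central'' point that is unsupported yet poorly approximated by every supported point. Concretely, I would take $X=\{x^0,x^1,\dots,x^p\}$ with images $f(x^0)=y^*\colonequals(1,\dots,1)$ and, for each $i\in\{1,\dots,p\}$, $f(x^i)=v^{(i)}$, where $v^{(i)}$ has a large entry $B$ in coordinate~$i$ and a small entry $s$ in every other coordinate; here $0<s<1<B<p$ are rational constants to be fixed. The two free parameters $s$ and $B$ will be chosen so that $y^*$ lies strictly above the lower-left convex hull of the $v^{(i)}$ (making it unsupported) while each $v^{(i)}$ is forced to incur an approximation factor of at least $B$ in its single bad coordinate~$i$.

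The first main step is to show that $y^*$ is unsupported, i.e.\ that for no positive weight vector $w$ is $x^0$ optimal for $\Pi^{\WS}(w)$. Writing $W\colonequals\sum_j w_j$, the weighted sum of $v^{(i)}$ equals $sW+w_i(B-s)$, which is minimized over $i$ at the smallest weight and hence is at most $sW+\tfrac{W}{p}(B-s)$. Comparing this to the weighted sum $W$ of $y^*$, a short computation shows that $y^*$ is strictly beaten for every $w$ precisely when $B<p-(p-1)s$; this is the quantitative constraint the parameters must satisfy. It is also routine to check that each $v^{(i)}$ is itself supported (take $w_i$ tiny and the remaining weights large), so that $\XS=\{x^1,\dots,x^p\}$ and the instance is non-degenerate.

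The second main step is the approximation lower bound. For $v^{(i)}$ to $\alpha$-approximate $y^*$ we need $v^{(i)}_k\le\alpha_k\cdot y^*_k=\alpha_k$ for all $k$; in coordinate $i$ this forces $\alpha_i\ge B>1$. Consequently any admissible $\alpha$ has $\alpha_i$ contributing to the budget, so $\sum_{j:\alpha_j>1}\alpha_j\ge\alpha_i\ge B$. To defeat the target set $\mathcal{A}$ it therefore suffices to make $B>p-\varepsilon$, which is compatible with the unsupportedness constraint $B<p-(p-1)s$ as soon as $(p-1)s<\varepsilon$. So I would fix any rational $s\in\bigl(0,\tfrac{\varepsilon}{p-1}\bigr)$ and any rational $B\in(p-\varepsilon,\,p-(p-1)s)$; all objective values are then positive rationals with $\LB$ and $\UB$ of constant encoding length, as required.

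Putting the pieces together, $y^*$ is a feasible point that is not itself supported, and no $v^{(i)}\in\XS$ can $\alpha$-approximate it for any $\alpha\in\mathcal{A}$, since every such $\alpha$ would need $\sum_{j:\alpha_j>1}\alpha_j\ge B>p-\varepsilon$, contradicting $\sum_{j:\alpha_j>1}\alpha_j=p-\varepsilon$. Hence $\XS$ is not an $\mathcal{A}$-approximation. I expect the only delicate point to be the joint feasibility of the two parameter constraints: $y^*$ must sit just barely above the convex hull (forcing $B$ close to $p$ from below and $s$ close to $0$) while still being badly approximated (forcing $B>p-\varepsilon$). The computation in the first step is what makes this balance explicit, and verifying that the worst case over weight vectors is attained at equal weights (so that the bound $w_{\min}/W\le\tfrac1p$ is tight) is the crux.
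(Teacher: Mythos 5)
Your construction is, up to a harmless rescaling, exactly the paper's: the paper takes $\tilde{y}=(\frac{M+1}{p},\dots,\frac{M+1}{p})$ and $y^j$ with $M$ in coordinate $j$ and $\frac1p$ elsewhere, which after normalizing $\tilde y$ to $(1,\dots,1)$ is your family with $s=\frac{1}{M+1}$ and $B=\frac{pM}{M+1}$, and the key step is the same coordinate-ratio bound forcing $\alpha_j\ge B>p-\varepsilon$. Your argument is correct (and spells out the supportedness/unsupportedness verifications that the paper only asserts), so it is essentially the same proof.
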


\begin{proof}
In the following, we only specify the set~$Y$ of images. A corresponding instance consisting of a set~$X$ of feasible solutions and an objective function~$f$ can then easily be obtained, e.\,g., by setting $X\colonequals Y$ and $f\colonequals \text{id}_{\mathbb{R}^p}$.

\smallskip

For $M>0$, let $Y\colonequals\{y^{1},\dots,y^{p},\tilde{y}\}$ with $y^{1}=(M,\frac{1}{p},\dots,\frac{1}{p})$, $y^{2}=(\frac{1}{p},M,\frac{1}{p},\dots,\frac{1}{p})$, \dots, $y^{p}=(\frac{1}{p},\dots,\frac{1}{p},M)$ and $\tilde{y}=\left(\frac{M+1}{p},\dots,\frac{M+1}{p}\right)$. Note that the point~$\tilde{y}$ is unsupported, while $y^{1},\dots,y^{p}$ are supported (an illustration for the case $p=2$ is provided in Figure~\ref{fig:minimization-impossiblility}).

\medskip

\noindent
Moreover, the ratio of the $j$-th components of the points~$y^j$ and~$\tilde{y}$ is exactly
\begin{align*}
	\frac{M}{\nicefrac{(M+1)}{p}} = p\cdot\frac{M}{M+1},
\end{align*}
which is larger than $p-\varepsilon$ for $M>\frac{p}{\varepsilon}-1$. Consequently, for such~$M$, the point~$\tilde{y}$ is not $\alpha$-approximated by any of the supported points $y^1,\dots,y^p$ for any $\alpha\in\mathcal{A}$, which proves the claim.
\end{proof}

\begin{figure}[ht!]
	\begin{center}
		\begin{tikzpicture}[scale=1.25]
		\fill[gray!30] (0.05,3.421) -- (0.05,7.4) -- (7.4,7.4) -- (7.4,3.421);
		\fill[gray!30] (3.421,0.05) -- (3.421,7.4) -- (7.4,7.4) -- (7.4,0.05);
    \fill[gray!30] (0.0263,7.4) -- (0.05,7.4) -- (0.05,6.5) -- (0.0263,6.5);
    \fill[gray!30] (7.4,0.0263) -- (7.4,0.05) -- (6.5,0.05) -- (6.5,0.0263);
		\draw[->] (-0.2,0) -- (7.4,0) node[below right] {$f_1$};
		\draw[->] (0,-0.2) -- (0,7.4) node[above left] {$f_2$};
		\draw[dashed,gray] (0.05,6.5) -- (6.5,0.05);
		\fill (3.34,3.34) circle (1.5pt) node[above right] {$\tilde{y}=(\frac{M+1}{2},\frac{M+1}{2})$};
    \fill (0.05,6.5) circle (1.5pt) node[above right] {$y^2=(\frac{1}{2},M)$};
    \fill (6.5,0.05) circle (1.5pt) node[above right] {$y^1=(M,\frac{1}{2})$};
		\end{tikzpicture}
		\caption{Image space of the instance constructed in the proof of Theorem~\ref{thm:inapprox-min} for $p=2$. The shaded region is $\{(1,2-\varepsilon),(2-\varepsilon,1)\}$-approximated by the supported points~$y^1,y^2$.}\label{fig:minimization-impossiblility}
	\end{center}
\end{figure}
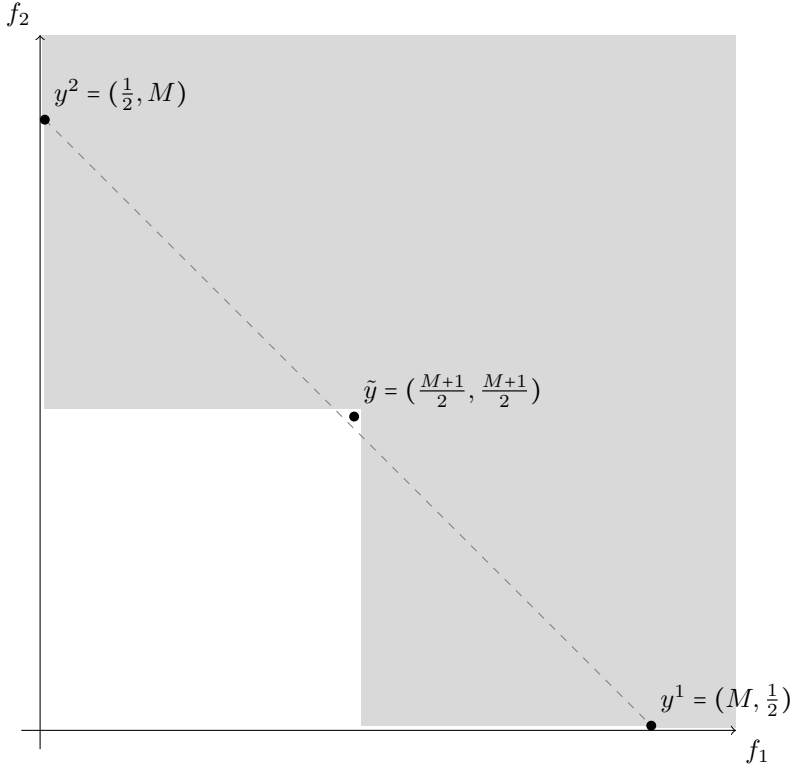



We remark that the set of points~$Y$ constructed in the proof of Theorem~\ref{thm:inapprox-min} can easily be obtained from instances of many well-known multiobjective minimization problems such as multiobjective shortest path, multiobjective spanning tree, multiobjective minimum ($s$-$t$-) cut, or multiobjective TSP (for multiobjective shortest path, for example, a collection of $p+1$ disjoint $s$-$t$-paths whose cost vectors correspond to the points $y^{1},\dots,y^{p},\tilde{y}$ suffices). Consequently, the result from Theorem~\ref{thm:inapprox-min} holds for each of these specific problems as well.

\medskip

Moreover, note that also the classical approximation result obtained in Corollary~\ref{cor:classical-result} is arbitrarily close to best possible in case that the weighted sum problem is solved exactly: While Corollary~\ref{cor:classical-result} shows that a $(p+\varepsilon,\dots,p+\varepsilon)$-approximation is obtained from Algorithm~\ref{alg:mainAlgo} when solving the weighted sum problem exactly, the instance constructed in the proof of Theorem~\ref{thm:inapprox-min} shows that the supported solutions do not yield an approximation guarantee of $(p-\varepsilon,\dots,p-\varepsilon)$ for any $\varepsilon>0$. This yields the following theorem:

\enlargethispage{\baselineskip}

\begin{theorem}\label{thm:inapprox-min-classical}
For any $\varepsilon>0$, there exists an instance of a $p$-objective minimization problem for which the set~$\XS$ of supported solutions is not a $(p-\varepsilon,\dots,p-\varepsilon)$-approximation.
\end{theorem}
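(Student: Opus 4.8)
The plan is to reuse the very same instance constructed in the proof of Theorem~\ref{thm:inapprox-min}, since the desired statement is essentially the singleton-$\mathcal{A}$ specialization of that result, but with the inequality chased in the opposite direction. First I would fix $\varepsilon>0$ and, for a parameter $M>0$ to be chosen later, take $Y\colonequals\{y^1,\dots,y^p,\tilde y\}$ exactly as before: $y^k$ has the large value~$M$ in coordinate~$k$ and $\frac{1}{p}$ in every other coordinate, while $\tilde y=(\frac{M+1}{p},\dots,\frac{M+1}{p})$. As already noted there, each~$y^k$ is supported (it is the unique optimizer of the weighted sum with a sufficiently large weight on objective~$k$) and $\tilde y$ is unsupported, so $\XS=\{y^1,\dots,y^p\}$.

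The key step is to show that no single supported point $(p-\varepsilon,\dots,p-\varepsilon)$-approximates~$\tilde y$. Here the crucial observation is that for each~$k$, the point~$y^k$ is \emph{bad in exactly coordinate~$k$}: the ratio of its $k$-th component to that of~$\tilde y$ equals
\begin{align*}
	\frac{M}{\nicefrac{(M+1)}{p}} = p\cdot\frac{M}{M+1},
\end{align*}
which exceeds~$p-\varepsilon$ precisely when $M>\frac{p}{\varepsilon}-1$. Thus, choosing any such~$M$, every candidate~$y^k\in\XS$ violates the approximation bound in its own coordinate~$k$, i.e.\ $f_k(y^k)=M>(p-\varepsilon)\cdot\frac{M+1}{p}=(p-\varepsilon)\cdot f_k(\tilde y)$, so no~$y^k$ can $(p-\varepsilon,\dots,p-\varepsilon)$-approximate~$\tilde y$. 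Since these are the only supported points, $\XS$ fails to be a $(p-\varepsilon,\dots,p-\varepsilon)$-approximation, which is exactly the claim.

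I do not expect any genuine obstacle, as this is a direct corollary of the construction underlying Theorem~\ref{thm:inapprox-min}: the set $\{(p-\varepsilon,\dots,p-\varepsilon)\}$ is a singleton whose unique vector does not lie in the set~$\mathcal{A}$ used there (its off-coordinates sum to more than $p-\varepsilon$), so I cannot simply quote that theorem verbatim; instead I must re-run the one-line ratio computation for the uniform vector. The only point requiring a little care is confirming that~$\tilde y$ is the worst-case feasible point to approximate and that the~$y^k$ are genuinely supported while~$\tilde y$ is not, but both facts were already established in the proof of Theorem~\ref{thm:inapprox-min} and may be invoked directly. I would close by remarking, as in the paragraph preceding the statement, that the same~$Y$ arises from standard problems (multiobjective shortest path, spanning tree, min-cut, TSP), so the inapproximability transfers to those problems as well.
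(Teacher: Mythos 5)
Your proposal is correct and follows exactly the paper's own argument: the paper proves this theorem by reusing the instance from Theorem~\ref{thm:inapprox-min} and observing that the ratio $p\cdot\frac{M}{M+1}$ exceeds $p-\varepsilon$ for $M>\frac{p}{\varepsilon}-1$, so each supported point~$y^k$ violates the bound in its own coordinate~$k$. Your additional remark that the statement is not a verbatim corollary of Theorem~\ref{thm:inapprox-min} (since $(p-\varepsilon,\dots,p-\varepsilon)\notin\mathcal{A}$) but requires rerunning the same one-line computation is accurate and matches the paper's presentation.
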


\section{Applications}\label{sec:applications}

Our results can be applied to a large variety of minimization problems since exact or approximate polynomial-time algorithms are available for the weighted sum scalarization of many problems.

\subsection{Problems with a polynomial-time solvable weighted sum scalarization}

If the weighted sum scalarization can be solved exactly in polynomial time, Corollary~\ref{cor:main-result-special} shows that Algorithm~\ref{alg:mainAlgo} yields a multi-factor approximation where each feasible solution is approximated with some approximation guarantee $(\alpha_1,\dots,\alpha_p)$ such that $\sum_{j:\alpha_j>1}\alpha_j=p+\varepsilon$ and $\alpha_i=1$ for at least one~$i$.

\smallskip

Many problems of this kind admit an MFPTAS, i.e., a $(1+\varepsilon,\dots,1+\varepsilon)$-appro\-xi\-ma\-tion that can be computed in time polynomial in the encoding length of the input and $\frac{1}{\varepsilon}$. The approximation guarantee we obtain is worse in this case, even if the sum of the approximation factors for which an error can be observed is $p+\varepsilon$ in both approaches. The running time, however, is usually significantly better in our approach.

\smallskip

For the multiobjective shortest path problem, for example, the existence of an MFPTAS was shown in~\cite{Papadimitriou+Yannakakis:multicrit-approx}, while several specific MFPTAS have been proposed. Among these, the MFPTAS with the best running time is the one proposed in~\cite{Tsaggouris+Zaroliagis:mult-shortest-path}. For $p\geq 2$, their running time for general digraphs with $n$~vertices and $m$~arcs is $\mathcal{O}\left(m\cdot n^p \left(\frac{1}{\varepsilon}\log\frac{\UB}{\LB} \right)^{p-1}\right)$ while ours is only $\mathcal{O}\left((m + n\log\log n) \cdot \left(\frac{1}{\varepsilon}\log\frac{\UB}{\LB} \right)^{p-1}\right)$ using one of the fastest algorithms for single objective shortest path~\cite{Thorup:SP}, and even $\mathcal{O}\left((m + n\log\log n) \cdot \log \left(\frac{1}{\varepsilon}\log\frac{\UB}{\LB} \right)\cdot\log\frac{\UB}{\LB}\right)$ for $p=2$, using Theorem~\ref{thm:binary-sear-approx} and the same single objective algorithm.

\smallskip

There are, however, also problems for which the weighted sum scalarization can be solved exactly in polynomial time, but whose multiobjective version does \emph{not} admit an MFPTAS unless $\textsf{P}=\textsf{NP}$. For example, this is the case for the minimum $s$-$t$-cut problem~\cite{Papadimitriou+Yannakakis:multicrit-approx}. For yet other problems, like, e.g., the minimum weight perfect matching problem, only a randomized MFPTAS is known so far~\cite{Papadimitriou+Yannakakis:multicrit-approx}. In both cases, our algorithm can still be applied.

\subsection{Problems with a polynomial-time approximation scheme for the weighted sum scalarization}

For problems where the weighted sum scalarization admits a polynomial-time approximation scheme, Corollary~\ref{cor:weighted-sum-PTAS} shows that Algorithm~\ref{alg:mainAlgo} yields a multi-factor approximation where each feasible solution is approximated with some approximation guarantee $(\alpha_1,\dots,\alpha_p)$ such that $\sum_{j:\alpha_j>1}\alpha_j=p+\varepsilon$. Thus, only the property that $\alpha_i=1$ for at least one~$i$ is lost compared to the case where the weighted sum scalarization can be solved exactly in polynomial time. 

\smallskip

Since there exists a vast variety of single objective problems that admit po\-ly\-no\-mi\-al-time approximation schemes, this result is also widely applicable and yields the best known multiobjective approximation results for many problems. For example, we obtain the best known approximation results for the multiobjective versions of the weighted planar TSP (for which a polynomial-time approximation scheme with running time linear in the number~$n$ of vertices exists~\cite{Klein:planar-TSP}) and minimum weight planar vertex cover (for which a polynomial-time approximation scheme was proposed in~\cite{Baker:planar-graphs}). Note that, for both of these problems and many others in this class, it is not known whether the multiobjective version admits an MPTAS.

\subsection{Problems with a polynomial-time $\sigma$-approximation for the weigh\-ted sum scalarization}

If the weighted sum scalarization admits a polynomial-time $\sigma$-approxi\-ma\-tion algorithm (where $\sigma$ can be either a constant or a function of the input size), Theorem~\ref{thm:main-result} shows that Algorithm~\ref{alg:mainAlgo} yields a multi-factor approximation where each feasible solution is approximated with some approximation guarantee $(\alpha_1,\dots,\alpha_p)$ such that $\sum_{j:\alpha_j>1}\alpha_j=\sigma\cdot p+\varepsilon$ and $\alpha_i\leq\sigma$ for at least one~$i$. Moreover, by Corollary~\ref{cor:classical-result}, the algorithm also yields a (classical) $(\sigma\cdot p+\varepsilon,\dots,\sigma\cdot p+\varepsilon)$-approximation.

\smallskip

These results yield the best known approximation guarantees for many well-studied problems whose single objective version does not admit a po\-ly\-no\-mi\-al-time approximation scheme unless $\textsf{P}=\textsf{NP}$. Consequently, the multiobjective version of these problems does not admit an MPTAS under the same assumption. Problems of this kind include, e.g., minimum weight vertex cover, minimum $k$-spanning tree, minimum weight edge dominating set, and minimum metric $k$-center, all of which admit $2$-approximation algorithms in the single objective case (see \cite{Bar-Yehuda+Even:vertex-cover,Garg:STOC05,Fujito+Nagamochi:edge-dom-set,Hochbaum+Shmoys:bottleneck-problems}, respectively). An example of a problem where only a non-constant approximation factor can be obtained in the single objective case is the minimum weight set cover problem, where only a $(1+\ln |S|)$-approximation exists with~$|S|$ denoting the cardinality of the ground set to cover~\cite{Chvatal:set-cover-approx}. For all of these problems, Algorithm~\ref{alg:mainAlgo} yields both the best known classical approximation result for the multiobjective version as well as the first multi-factor approximation result.

\smallskip

A particularly interesting problem of this class is the metric version of the symmetric traveling salesman problem (metric STSP). Here, problem specific (deterministic) algorithms exist that obtain approximation guarantees of $(2,2)$ in the biobjective case~\cite{Glasser+etal:TSP-generalized} and $(2+\varepsilon,\dots,2+\varepsilon)$ for any constant number of objectives~\cite{Manthey+Ram:multicriteria-TSP}. The best approximation algorithm for the single objective version is the $\frac{3}{2}$-approximation algorithm by Christofides~\cite{Christofides:TSP}, which can be used in Algorithm~\ref{alg:mainAlgo} in order to obtain a multi-factor approximation where each feasible solution is approximated with some approximation guarantee $(\alpha_1,\dots,\alpha_p)$ such that $\sum_{j:\alpha_j>1}\alpha_j=\frac{3}{2}\cdot p+\varepsilon$ and $\alpha_i\leq\frac{3}{2}$ for at least one~$i$. 

\section{An inapproximability result for maximization problems}\label{sec:maximixation}

In this section, we show that the weighted sum scalarization is much less powerful for approximating multiobjective \emph{maximization} problems.

\smallskip

Intuitively, in a multiobjective \emph{minimization} problem, the positivity of the objective function values and of the weights used within the weighted sum scalarization implies that a bad (i.e., large) value in some of the objective functions cannot be compensated in the weighted sum by a good (i.e., small) value in another objective function. This means that, if the weighted sum of the objective values of a solution for a minimization problem is (close to) optimal (i.e., minimal), then no single objective value can be too large. More precisely, for $f(x)=(f_1(x), \dots , f_p(x))\in \mathbb R^p$ and $w\in\mathbb{R}^p$ with $f_j(x),w_j > 0$ for $j=1, \dots , p$ and $v > 0$, it holds that
\begin{align*}
\sum_{j=1}^p w_j f_j(x) \leq v \Longrightarrow  f_j(x) \leq \frac{1}{w_j} v  \text{ for all } j=1, \dots , p.
\end{align*}

For a \emph{maximization} problem, however, a bad (i.e., small) value in some of the objective functions can be completely compensated in the weighted sum by a very good (i.\,e., large) value in another objective function. Thus, a solution that obtains a (close to) optimal (i.e., maximal) value in the weighted sum of the objective values can still have a very small (i.\,e., bad) value in some of the objectives:
\begin{align*}
\sum_{j=1}^p w_jf_j(x) \geq v \not\Longrightarrow  f_j(x) \geq cv \text{ for all } j=1, \dots , p \text{ and any constant } c>0.
\end{align*}

For instance, while Corollary~\ref{cor:main-result-special-biobjective} implies that the set of supported solutions yields a $\{(1,2+\varepsilon),(2+\varepsilon,1)\}$-approximation for any $\varepsilon>0$ in the case of a biobjective minimization problem, no similar result holds for maximization problems. Indeed, for biobjective maximization problems, Figure~\ref{fig:maximization-impossiblility} demonstrates that there may exist unsupported solutions that are approximated only with an arbitrarily large approximation factor in (all but) one objective function by any supported solution. 

\begin{figure}[ht!]
	\begin{center}
		\begin{tikzpicture}[scale=1.25]
		\fill[gray!30] (1,6.5) -- (0,6.5) -- (0,0) -- (1,0) -- (1,6.5);
		\fill[gray!30] (6.5,1) -- (6.5,0) -- (0,0) -- (0,1) -- (6.5,1);
    \fill[gray!30] (0.5,7.4) -- (0,7.4) -- (0,0) -- (0.5,0) -- (0.5,7.4);
    \fill[gray!30] (7.4,0.5) -- (7.4,0) -- (0,0) -- (0,0.5) -- (7.4,0.5);
		\draw[->] (-0.2,0) -- (7.4,0) node[below right] {$f_1$};
		\draw[->] (0,-0.2) -- (0,7.4) node[above left] {$f_2$};
		\fill (0.5,6.5) circle (1.5pt) node[above right] {$y^2=(\frac{1}{2},M)$};
		\fill (6.5,0.5) circle (1.5pt) node[above right] {$y^1=(M,\frac{1}{2})$};
		\draw[dashed,gray] (0.5,6.5) -- (6.5,0.5);
		\fill (3.25,3.25) circle (1.5pt) node[below left] {$\tilde{y}=(\frac{M}{2},\frac{M}{2})$};
		\end{tikzpicture}
		\caption{Image space of a biobjective maximization problem with three points, only two of which are supported (where $M>0$ is a large integer). Each of the two supported points does not approximate the unsupported point~$\tilde{y}$ better than a factor~$M$ in one of the two objective functions. The shaded region is $\{(1,2),(2,1)\}$-approximated by the supported points~$y^1,y^2$.}\label{fig:maximization-impossiblility}
	\end{center}
\end{figure}
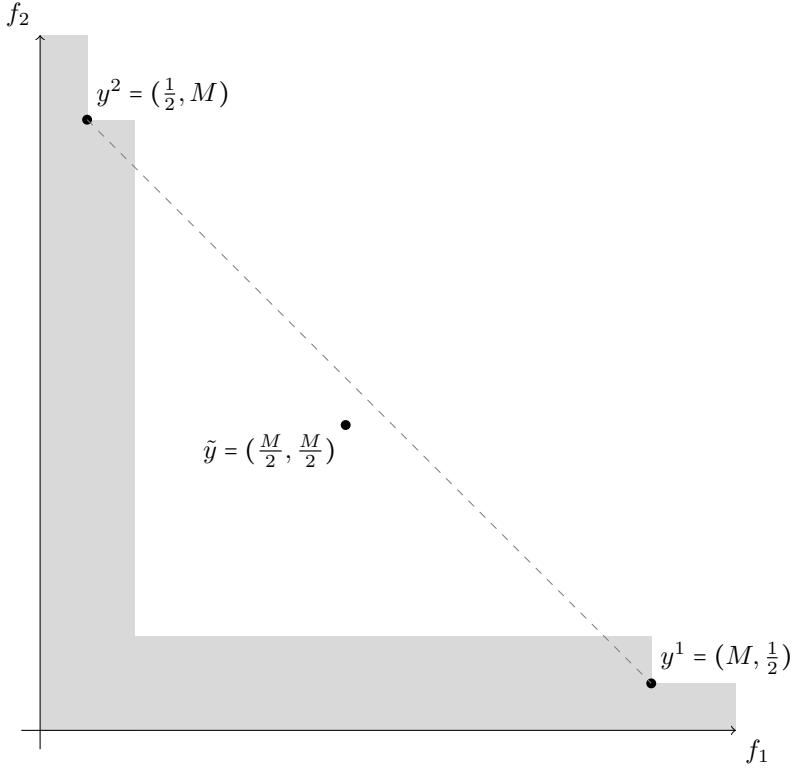

\medskip

The following theorem generalizes the construction in Figure~\ref{fig:maximization-impossiblility} to an arbitrary number of objectives and shows that, for maximization problems, a polynomial approximation factor can, in general, not be obtained in more than one of the objective functions simultaneously even if the approximating set consists of all the supported solutions:

\begin{theorem}\label{thm:maximization-impossiblility}
For any $p\geq 2$ and any polynomial~$\text{pol}$, there exists an instance~$I$ of a $p$-objective maximization problem such that at least one unsupported solution is \emph{not} approximated with an approximation guarantee of $2^{\text{pol}(|I|)}$ in~$p-1$ of the objective functions by any supported solution.
\end{theorem}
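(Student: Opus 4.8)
The plan is to generalize the biobjective construction from Figure~\ref{fig:maximization-impossiblility} to $p$ objectives by placing $p$ supported points along the coordinate ``axes'' of the objective space and one unsupported point in the ``center'', with the gap between them controlled by a parameter~$M$ that we take to be super-polynomially large in~$|I|$. Concretely, I would fix a large integer~$M$ (to be specified) and define the feasible point set $Y\colonequals\{y^1,\dots,y^p,\tilde{y}\}$ where $y^k$ has value~$M$ in coordinate~$k$ and a small value (say~$1$) in every other coordinate, while $\tilde{y}\colonequals(M-1,\dots,M-1)$ (or any vector each of whose coordinates is close to~$M$ but strictly below it). As in Theorem~\ref{thm:inapprox-min}, a concrete instance realizing this image set is obtained by taking $X\colonequals Y$ and $f\colonequals\text{id}_{\mathbb{R}^p}$.

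The two things to verify are (i) that $y^1,\dots,y^p$ are supported while $\tilde{y}$ is unsupported, and (ii) that no supported point approximates~$\tilde{y}$ within the claimed factor in more than one objective. For~(i), each $y^k$ is the unique maximizer of the weighted sum with the weight vector placing a large weight on coordinate~$k$ (e.g.\ $w=e_k$ perturbed to be strictly positive), so $y^k\in\XS$; for~$\tilde{y}$ I would check that for every positive weight vector~$w$ the weighted sum $\sum_j w_j f_j$ is strictly larger at some~$y^k$ than at~$\tilde{y}$. The natural argument is that $\sum_j w_j\tilde{y}_j=(M-1)\sum_j w_j$, whereas $\max_k \sum_j w_j y^k_j\geq \frac{1}{p}\sum_k\sum_j w_j y^k_j = \frac{1}{p}\bigl(M\sum_j w_j + (p-1)\sum_j w_j\bigr)=\frac{M+p-1}{p}\sum_j w_j$, and for suitable coordinates this dominates; one must choose the small coordinate value and~$\tilde{y}$ carefully so that this strict inequality holds for \emph{all} positive~$w$, which shows~$\tilde{y}$ is unsupported.

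For~(ii), the key observation is that for a maximization problem $y^k$ $\alpha$-approximates~$\tilde{y}$ in coordinate~$j$ means $\alpha\cdot y^k_j\geq\tilde{y}_j$. In every coordinate $j\neq k$ we have $y^k_j=1$ while $\tilde{y}_j=M-1$, so the approximation factor needed in that coordinate is at least $M-1$; thus $y^k$ can approximate~$\tilde{y}$ within a small factor in at most the single coordinate~$k$. Hence no supported point approximates~$\tilde{y}$ within factor $\beta$ in more than one objective whenever $\beta< M-1$. Setting $M\colonequals 2^{\text{pol}(|I|)}+2$ (so that $M-1>2^{\text{pol}(|I|)}$) then yields the claim, provided we confirm that the encoding length~$|I|$ of this instance is polynomially related to $\log M$, i.e.\ that writing down~$M$ in binary keeps the instance of polynomial size relative to~$\text{pol}(|I|)$.

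The main obstacle I anticipate is the self-referential sizing of~$M$: we want $M\approx 2^{\text{pol}(|I|)}$, but~$|I|$ itself grows with $\log M$. The fix is standard but must be done explicitly: since the instance stores a constant number~$p+1$ of points each of whose coordinates is either~$1$ or~$\Theta(M)$, the encoding length is $|I|=\Theta(p\log M)$; choosing $M$ so that $M-1>2^{\text{pol}(|I|)}$ is then a matter of picking $\log M$ larger than $\text{pol}\bigl(c\cdot p\log M\bigr)$ for the appropriate constant~$c$, which is always possible because we are free to inflate~$M$ (e.g.\ pad the coordinate values with extra bits) until the inequality holds; one verifies this is consistent since the right-hand side grows polynomially in $\log M$ while the left-hand side grows exponentially in $\log M$. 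The secondary technical point is the unsupportedness proof in~(i), where the strict separation of~$\tilde{y}$ from the lower convex boundary for all positive weights must be argued uniformly in~$w$ rather than for a single weight vector.
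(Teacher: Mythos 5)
There are two genuine gaps here, and both are points where the paper's proof makes a choice that your sketch gets wrong.

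First, your unsupported point is in the wrong place. With $y^k$ having value $M$ in coordinate $k$ and value $1$ elsewhere, the point $\tilde{y}=(M-1,\dots,M-1)$ is \emph{supported}: already for $p=2$ and $w=(1,1)$ its weighted sum is $2M-2$, which strictly exceeds the value $M+1$ attained by $y^1$ and $y^2$, so $\tilde{y}$ is the unique optimum of that weighted sum problem. Geometrically, any point with all coordinates close to $M$ lies strictly above the upper convex hull of $y^1,\dots,y^p$ (whose centroid has coordinates only about $M/p$), so your hedge ``any vector each of whose coordinates is close to $M$ but strictly below it'' cannot be repaired. Your own averaging bound exposes this: you show $\max_k\sum_j w_j y^k_j\geq\frac{M+p-1}{p}\sum_j w_j$, but $(M-1)\sum_j w_j$ is larger for large $M$, so the inequality you need goes the wrong way. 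The paper instead takes $\tilde{y}=(\frac{M}{p},\dots,\frac{M}{p})$ (with the small coordinates of the $y^k$ equal to $\frac{1}{p}$), which is strictly dominated by the centroid of $y^1,\dots,y^p$ and hence unsupported for \emph{every} positive weight vector, while the ratio $\tilde{y}_j/y^k_j=M$ in all $p-1$ coordinates $j\neq k$ still gives the desired lower bound.

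Second, your resolution of the self-referential sizing of $M$ is backwards and does not work. If the instance literally stores the objective values (your $X=Y$, $f=\mathrm{id}$ encoding), then $|I|\geq\log_2 M$, so for any polynomial with $\text{pol}(n)\geq n$ we get $2^{\text{pol}(|I|)}\geq 2^{\log_2 M}=M$, and the requirement $M-1>2^{\text{pol}(|I|)}$ is unsatisfiable; the inequality you reduce to, $\log M>\text{pol}(c\cdot p\log M)$, fails for every superlinear polynomial as $\log M\to\infty$ (the left-hand side is linear in $\log M$, not exponential). Padding only makes $|I|$, and hence $2^{\text{pol}(|I|)}$, larger. The paper avoids this by defining the problem so that an instance is merely a tuple of $p+1$ distinct integer vectors (the abstract solutions), and the objective function itself computes the values involving $M\colonequals 2^{\text{pol}(|I|)}+1$ from the encoding length $|I|$; since $M$ has only $\text{pol}(|I|)+O(1)$ bits, the objectives remain polynomially computable and the bounds $\LB(j),\UB(j)$ have polynomial encoding length, but $M$ is not part of the input and so can exceed $2^{\text{pol}(|I|)}$. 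Without some such decoupling of the objective values from the instance encoding, the statement as quantified (``for any polynomial $\text{pol}$'') cannot be established by your construction.
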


\begin{proof}
Given $p\geq 2$ and a polynomial~$\text{pol}$, consider the $p$-objective maximization problem where each instance~$I$ is given by a $(p+1)$-tuple $(x^1,\dots,x^p,\tilde{x})$ of pairwise different vectors $x^1,\dots,x^p,\tilde{x}\in\mathbb{Z}^p$ and the feasible set is $X=\{x^1,\dots,x^p,\tilde{x}\}$.

Given the encoding length~$|I|$ of such an instance
, we set $M\colonequals 2^{\text{pol}(|I|)} + 1$ and $f(x^{1})=(M,\frac{1}{p},\dots,\frac{1}{p})$, $f(x^{2})=(\frac{1}{p},M,\frac{1}{p},\dots,\frac{1}{p})$, \dots, $f(x^{p})=(\frac{1}{p},\dots,\frac{1}{p},M)$ and $f(\tilde{x})=\left(\frac{M}{p},\dots,\frac{M}{p}\right)$. Then, the solution~$\tilde{x}$ is unsupported, while $x^{1},\dots,x^{p}$ are supported.

\smallskip

Moreover, the ratio of the $j$-th components of the images~$f(\tilde{x})$ and~$f(x^{\ell})$ for any $j\neq \ell$ is exactly $M=2^{\text{pol}(|I|)} + 1 > 2^{\text{pol}(|I|)}$, which shows that~$x^{\ell}$ does not yield an approximation guarantee of $2^{\text{pol}(|I|)}$ in objective function~$f_j$ for any $j\neq \ell$.\qed
\end{proof}

\section{Conclusion}\label{sec:conclusion}

The weighted sum scalarization is the most frequently used method to transform a multiobjective into a single objective optimization problem. In this article, we contribute to a better understanding of the quality of approximations for general multiobjective optimization problems which rely on this scalarization technique.
To this end, we refine and extend the common notion of approximation quality in multiobjective optimization. As we show, the resulting multi-factor notion of approximation more accurately describes the approximation quality in multiobjective contexts.
We also present an efficient approximation algorithm for general multiobjective minimization problems which turns out to be best possible under some additional assumptions. Interestingly, we show that a similar result based on supported solutions cannot be obtained for multiobjective maximization problems.

\medskip

The new multi-factor notion of approximation is independent of the specific algorithms used here. Thus, a natural direction for future research is to analyze new and existing approximation algorithms more precisely with the help of this new notion. This may yield both a better understanding of existing approaches as well as more accurate approximation results.

\appendix
\section{Proof of Lemma~\ref{lem:tree-size}}

\begin{proof}
In order to show the claimed upper bound on the number of nodes, we first show that any binary tree~$T$ with height~$h$ and $k$~nodes with two children that has the maximum possible number of nodes among all such binary trees must have the following property: If~$v$ is a node with two children at level~$\ell$, then all nodes~$u$ at the levels~$0,\dots,\ell-1$ must also have two children.

So assume by contradiction that~$T$ is a binary tree maximizing the number of nodes among all trees with height~$h$ and $k$~nodes with two children, but~$T$ does not have this property. Then there exists a node~$v$ in~$T$ with two children at some level~$\ell$ and a node~$u$ with at most one child at a lower level~$\ell'\in\{0,\dots,\ell-1\}$. Then, the binary tree~$T'$, that results from making one node~$w$ that is a child of~$v$ in~$T$ an (additional) child of~$u$, also has height~$h$, contains~$k$ nodes with two children, and has the same number of nodes as~$T$. Moreover, the level of~$w$ in~$T'$ changes to $\ell'+1<\ell+1$. Hence, also the level of any leave of the subtree rooted at~$w$ must have decreased by at least one. Thus, giving any leave of this subtree an additional child in~$T'$ would yield a binary tree of height~$h$ and~$k$ nodes with two children, and a strictly larger number of nodes than~$T$, contradicting the maximality of~$T$.

\smallskip

By the above property, in any binary tree maximizing the number of nodes among the trees satisfying the assumptions of the lemma, there are only nodes with two children on all levels $i<h'\colonequals \lfloor\log_2(k+1)\rfloor$ and only nodes with at most one child on all levels $i>h'$. Level~$h'$ may contain nodes with two children, but there is at least one node with only one child on this level.

\enlargethispage{2.5\baselineskip}

Consequently, there are at most~$k$ nodes in total on the levels $0,\dots,h'-1$ and at most $k+1$ nodes at level~$h'$. Moreover, there are at most $2(k+1)$ nodes at level~$h'+1$, each of which is the root of a subtree (path) consisting of at most~$h-h'$ nodes (each with at most one child). Overall, this proves an upper bound of at most $k + (k+1) + 2(k+1)\cdot(h-h')\in\mathcal{O}(k\cdot h)$ on the number of nodes in the tree.
~\end{proof}


\newpage

\bibliographystyle{siamplain}
\bibliography{literature}
\end{document}